    \@citea\NAT@hyper@{%
    	\NAT@nmfmt{\NAT@nm}%
    	\hyper@natlinkbreak{\NAT@aysep\NAT@spacechar}{\@citeb\@extra@b@citeb}%
    	\NAT@date
    }%
    \@citea\NAT@hyper@{%
    	\NAT@nmfmt{\NAT@nm}%
    	\hyper@natlinkbreak{\NAT@spacechar\NAT@@open\if*#1*\else#1\NAT@spacechar\fi}%
    	{\@citeb\@extra@b@citeb}%
    	\NAT@date
    }%
\def\keywords{{\vspace{0.25cm}\noindent\textbf{Keywords:~}}}
\newtheorem{proposition}{Proposition}
\newtheorem{lemma}{Lemma}
\def\veps{\varepsilon}
\def\bbeta{\boldsymbol{\beta}}
\def\bgamma{\boldsymbol{\gamma}}
\def\bet{\boldsymbol{\eta}}
\def\bmu{\boldsymbol{\mu}}
\def\bpsi{\boldsymbol{\psi}}
\def\bGamma{\boldsymbol{\Gamma}}
\def\bDelta{\boldsymbol{\Delta}}
\def\bLambda{\boldsymbol{\Lambda}}
\def\bSigma{\boldsymbol{\Sigma}}
\def\bone{\mathbf{1}}
\def\bzero{\mathbf{0}}
\def\b0{\mathbf{0}}
\def\bu{\mathbf{u}}
\def\bv{\mathbf{v}}
\def\bx{\mathbf{x}}
\def\by{\mathbf{y}}
\def\bz{\mathbf{z}}
\def\bA{\mathbf{A}}
\def\bB{\mathbf{B}}
\def\bD{\mathbf{D}}
\def\bG{\mathbf{G}}
\def\bH{\mathbf{H}}
\def\bI{\mathbf{I}}
\def\bO{\mathbf{O}}
\def\bP{\mathbf{P}}
\def\bQ{\mathbf{Q}}
\def\bR{\mathbf{R}}
\def\bS{\mathbf{S}}
\def\bU{\mathbf{U}}
\def\bV{\mathbf{V}}
\def\bW{\mathbf{W}}
\def\bX{\mathbf{X}}
\def\bY{\mathbf{Y}}
\def\bZ{\mathbf{Z}}
\def\bsz{\boldsymbol{z}}
\def\cT{\mathcal{T}}
\def\cV{\mathcal{V}}
\def\cY{\mathcal{Y}}
\def\scb{\textsc{b}}
\def\scr{\textsc{r}}
\def\scs{\textsc{s}}
\def\scu{\textsc{u}}
\def\scv{\textsc{v}}
\def\scx{\textsc{x}}
\def\scgam{{\scriptscriptstyle\Gamma}}
\def\tB{\textrm{\textup{B}}}
\def\tU{\textrm{\textup{U}}}
\def\tV{\textrm{\textup{V}}}
\newcommand*{\dt}[1]{%
	\accentset{\mbox{\large\bfseries.}}{#1}}
\newcommand*{\ddt}[1]{%
	\accentset{\mbox{\large\bfseries.\hspace{-0.25ex}.}}{#1}}
\def\d{\textrm{\textup{d}}} 
\def\real{\mathbb{R}} 
\def\natural{\mathbb{N}} 
\def\R{\mathbb{R}} 
\def\digamma{\textrm{\textup{digamma}}}	
\def\argmin{\mathop{\textrm{\textup{argmin}}}}
\def\vec{\textrm{\textup{vec}}}
\def\sign{\textrm{\textup{sign}}}
\def\diag{\textrm{\textup{diag}}}
\def\bzero{\mathbf{0}}
\def\bone{\mathbf{1}}
\def\E{\mathbb{E}} 
\def\Var{\mathbb{V}\textrm{\textup{ar}}} 
\def\EF{\textrm{\textup{EF}}} 
\def\<{\langle} 
\def\>{\rangle} 
\newsavebox{\@brx}
\newcommand{\llangle}[1][]{%
  \savebox{\@brx}{\(\m@th{#1\langle}\)}%
  \mathopen{\copy\@brx\mkern2mu\kern-0.9\wd\@brx\usebox{\@brx}}%
}
\newcommand{\rrangle}[1][]{%
  \savebox{\@brx}{\(\m@th{#1\rangle}\)}%
  \mathclose{\copy\@brx\mkern2mu\kern-0.9\wd\@brx\usebox{\@brx}}%
}
\def\dsty{\displaystyle}
\newcommandx{\cristiannote}[2][1=]{%
    \todo[inline, linecolor = orange,  backgroundcolor = orange!25,  bordercolor = orange,  #1]{#2}
}
\newcommandx{\davidenote}[2][1=]{%
    \todo[inline, linecolor = green, backgroundcolor = green!25, bordercolor = green, #1]{#2}
}
\newcommandx{\lievennote}[2][1=]{%
    \todo[inline, linecolor = magenta, backgroundcolor = magenta!25, bordercolor = magenta, #1]{#2}
}
\newcommandx{\alexnote}[2][1=]{%
    \todo[inline, linecolor = cyan, backgroundcolor = cyan!25, bordercolor = cyan, #1]{#2}
}
\newcommand{\cristian}[1]{\textcolor{orange}{#1}}
\newcommand{\davide}[1]{\textcolor{teal}{#1}}
\newcommand{\lieven}[1]{\textcolor{magenta}{#1}}
\newcommand{\alex}[1]{\textcolor{cyan}{#1}}
\newcommand\cristiansout{\bgroup\markoverwith{\cristian{\rule[0.5ex]{2pt}{1.0pt}}}\ULon}
\newcommand\davidesout{\bgroup\markoverwith{\davide{\rule[0.5ex]{2pt}{1.0pt}}}\ULon}
\newcommand\lievensout{\bgroup\markoverwith{\lieven{\rule[0.5ex]{2pt}{1.0pt}}}\ULon}
\newcommand\alexsout{\bgroup\markoverwith{\alex{\rule[0.5ex]{2pt}{1.0pt}}}\ULon}
\def\glm{\textrm{\textup{GLM}}}
\def\vglm{\textrm{\textup{VGLM}}}
\def\nmf{\textrm{\textup{NMF}}}
\def\glmpca{\textrm{\textup{glmPCA}}}
\def\airwls{\textrm{\textup{AIRWLS}}}
\def\sgdqn{\textrm{\textup{SGD-QN}}}
\def\adadelta{\textrm{\textup{AdaDelta}}}
\def\adagrad{\textrm{\textup{AdaGrad}}}
\def\rmsprop{\textrm{\textup{RMSProp}}}
\def\amsgrad{\textrm{\textup{AMSGrad}}}
\def\adam{\textrm{\textup{Adam}}}
\def\R{\texttt{R}}
\def\Cpp{\texttt{C++}}
\def\cmf{\texttt{CMF}}
\def\nmfpak{\texttt{NMF}}
\def\nmfp{\texttt{NMF+}}
\def\nnlm{\texttt{NNLM}}
\def\glmpca{\texttt{glmPCA}}
\def\nbwave{\texttt{NBWaVE}}
\def\airwls{\texttt{AIRWLS}}
\def\newton{\texttt{Newton}}
\def\avagrad{\texttt{AvaGrad}}
\def\fisher{\texttt{Fisher}}
\def\gfmam{\texttt{GFM-AM}}
\def\gfmvem{\texttt{GFM-VEM}}
\def\cmf{\textrm{CMF}}
\def\nmf{\textrm{NMF}}
\def\nmfp{\textrm{NMF+}}
\def\nbwave{\textrm{NBWaVE}}
\def\airwls{\textrm{AIRWLS}}
\def\newton{\textrm{Newton}}
\def\avagrad{\textrm{AvaGrad}}
\def\fisher{\textrm{Fisher}}
\def\coap{\textrm{COAP}}
\def\R{\texttt{R}}
\def\splatter{\texttt{splatter}}
\def\bioconductor{\texttt{Bioconductor}}
\def\cmfrec{\texttt{cmfrec}}
\def\NMF{\texttt{NMF}}
\def\NNLM{\texttt{NNLM}}
\def\sgdGMF{\texttt{sgdGMF}}
\def\NewWave{\texttt{NewWave}}
\def\COAP{\texttt{COAP}}
\def\GFM{\texttt{GFM}}
\def\GMF{\texttt{gmf}}
\def\DMF{\texttt{dmf}}
\def\glmPCA{\texttt{glmpca}}
\def\cmf{\textit{CMF}}
\def\nmf{\textit{NMF}}
\def\nmfp{\textit{NMF+}}
\def\nbwave{\textit{NBWaVE}}
\def\airwls{\textit{AIRWLS}}
\def\newton{\textit{Newton}}
\def\glmpca{\textit{glmPCA}}
\def\avagrad{\textit{AvaGrad}}
\def\fisher{\textit{Fisher}}
\def\nbwave{\textit{NBWaVE}}
\def\airwls{\textit{AIRWLS}}
\def\newton{\textit{Newton}}
\def\asgd{\textit{aSGD}}
\def\adam{\textit{Adam}}
\def\amsgrad{\textit{AMSGrad}}
\def\adagrad{\textit{AdaGrad}}
\def\adadelta{\textit{AdaDelta}}
\def\rmsprop{\textit{RMSProp}}
\def\sgdqn{\textit{SGD-QN}}
\def\coap{\textit{COAP}}
\def\gfmam{\textit{GFM-AM}}
\def\gfmvem{\textit{GFM-VEM}}
\def\mb{{\textsc{b}}}
\def\row{{\textsc{i}}}
\def\col{{\textsc{j}}}
\def\mb{{\scriptscriptstyle\mathcal{B}}}
\def\row{{\scriptscriptstyle\mathcal{I}}}
\def\col{{\scriptscriptstyle\mathcal{J}}}
\def\mb{{\scriptscriptstyle{B}}}
\def\row{{\scriptscriptstyle{I}}}
\def\col{{\scriptscriptstyle{J}}}
\def\nrow{n_{\row}^\star}
\def\ncol{m_{\col}^\star}
\newcommand{\cmark}{\checkmark}
\newcommand{\xmark}{$\times$}
\title{\LARGE\bf
    Stochastic gradient descent estimation of \\ 
    generalized matrix factorization models with \\
    application to single-cell RNA sequencing data
}
\author[1]{
    Cristian Castiglione%
    \thanks{%
        \href{mailto:cristian.castiglione@unibocconi.it}%
             {cristian.castiglione@unibocconi.it}, 
        \href{mailto:alexandre.segers@ugent.be}%
             {alexandre.segers@ugent.be}, 
        \href{mailto:lieven.clement@ugent.be}%
             {lieven.clement@ugent.be}, 
        \href{mailto:davide.risso@unipd.it}%
             {davide.risso@unipd.it}%
    }%
}
\author[2]{
    Alexandre Segers%
}
\author[2]{
    Lieven Clement%
}
\author[3]{
    Davide Risso%
}
\affil[1]{
    \textit{Bocconi Institute for Data Science and Analytics, Bocconi University}.
    \textit{Via R\"{o}ntgen 1, 20136 Milan, Italy}
}
\affil[2]{
    \textit{Department of Applied Mathematics, Computer Science and Statistics, Ghent University}.
    \textit{Krijgslaan 281-S9, 9000 Ghent, Belgium}
}
\affil[3]{
    \textit{Department of Statistical Sciences, University of Padova}.
    \textit{Via Cesare Battisti 241, 35121 Padova, Italy}.
}
\date{}
\begin{document}

\maketitle
\date{}


\begin{abstract}
\noindent
Single-cell RNA sequencing allows the quantification of gene expression at the individual cell level, enabling the study of cellular heterogeneity and gene expression dynamics.
Dimensionality reduction is a common preprocessing step critical for the visualization, clustering, and phenotypic characterization of samples. This step, often performed using principal component analysis or closely related methods, is challenging because of the size and complexity of the data.
In this work, we present a generalized matrix factorization model assuming a general exponential dispersion family distribution and we show that many of the proposed approaches in the single-cell dimensionality reduction literature can be seen as special cases of this model. Furthermore, we propose a scalable adaptive stochastic gradient descent algorithm that allows us to estimate the model efficiently, enabling the analysis of millions of cells. 
We benchmark the proposed algorithm through extensive numerical experiments against state-of-the-art methods and showcase its use in real-world biological applications.
The proposed method systematically outperforms existing methods of both generalized and non-negative matrix factorization, demonstrating faster execution times and parsimonious memory usage, while maintaining, or even enhancing, matrix reconstruction fidelity and accuracy in biological signal extraction. 
On real data, we show that our method scales seamlessly to millions of cells, enabling dimensionality reduction in large single-cell datasets.
Finally, all the methods discussed here are implemented in an efficient open-source \texttt{R} package, \texttt{sgdGMF}, available on CRAN. 

\keywords{%
    Dimension reduction;
    Generalized linear models;
    Matrix factorization;
    Stochastic optimization;
    single-cell;
    RNA-seq.
}
\end{abstract}

\section{Introduction}%
\label{sec:introduction}


\subsection{The ever-increasing single-cell RNA sequencing data sets}

Single-cell RNA sequencing (scRNA-seq) technologies have revolutionized the comprehension of biological processes by offering a quantitative measure of transcript abundance at the individual cell level. 
Single-cell resolution is critical for the study of cellular heterogeneity \citep{Wu2023,Kim2023}, temporal dynamics \citep{Kouno2013,Jean-Baptiste2019}, and cell-type differentiation \citep{Denyer2019}. However, the size and complexity of the data have dramatically increased compared to bulk sample-level assays, challenging statistical methods and software implementations to deal with thousands of genes profiled in millions of cells \citep{Angerer2017,Kharchenko2021}.

From a statistical perspective, scRNA-seq yields high-dimensional count data, in which observations (i.e., cells) lie in a $\sim 10{,}000$-dimensional gene space. Working in such high-dimensional spaces poses a wealth of statistical and computational challenges. 
\cite{Lahnemann2020} identified eleven \emph{grand challenges} in single-cell data science; here we focus on three of them: (i) the handling of data sparsity, (ii) the integration of datasets across samples and experiments, and (iii) the definition of flexible statistical frameworks for the discovery of complex gene expression patterns. These challenges are exacerbated by the growing size and complexity of single-cell datasets. Indeed, while early studies focused on few cells from one or few samples, modern studies involve complex experimental designs, collecting thousands of cells from several individuals across different experimental conditions \citep[see, e.g.,][]{Stephenson2021,Perez2022}.

Regarding sparsity, single-cell sequencing technologies yield count data with low mean and large variance, leading to extremely skewed distributions with a large fraction of zeros \citep{Hicks2018}. To address zero inflation and overdispersion problems, \cite{Risso2018} introduced a zero-inflated negative binomial framework for gene-expression matrix factorization, named ZINB-WaVE. 
Despite its superior performance compared to other methods, ZINB-WaVE's computational complexity renders it obsolete for the ever-increasing data volumes \citep{Sun2019, Cao2021}. Moreover, in modern UMI-based scRNA-seq datasets, the need for modeling zero inflation has decreased \citep{Townes2019, Svensson2020, Nguyen2023, Ahlmann2023}. 
In this regard, simpler Negative Binomial models, such as those developed by \cite{Townes2019} and \cite{Agostinis2022}, achieve similar performances with a much smaller computational burden. Nevertheless, even these methods struggle to scale with the current massive data volumes.

Another main challenge is the integration of data across different samples, experiments, or labs. These effects, often globally referred to as \emph{batches}, represent a source of unwanted variation that need to be accounted for in the analysis. 
In unsupervised settings, for instance cell clustering or trajectory analysis, batch effects are usually corrected for with ad hoc methods \citep[see, e.g.,][]{Haghverdi2018,Korsunsky2019}. While many such methods exist, the batch integration of single-cell data is still an open problem \citep{Luecken2025,Lahnemann2020}.

Factor analysis models are a promising, unified framework that allows to account for the count nature of the data, adjust for potential sources of unwanted variation, and provide a parsimonious, low-dimensional representation that can be used as input for downstream analyses. 
To this end, state-of-the-art methods leverage a low-rank representation of the data to extract useful information while correcting for known confounders such as batch effects or other technical or biological covariates \citep{Risso2018, Townes2019, Agostinis2022}.

However, because of the lack of scalability of state-of-the-art count-based factor models, biological researchers often resort to log-transformation of the scRNA-seq count tables and apply conventional principal component analysis (PCA; \citealp{Ahlmann2023}). 
While this is a much faster alternative, it ignores the complex mean-variance relation and discrete nature of the data and may introduce synthetic biases due to imperfect data transformations \citep{Townes2019}. 
Furthermore, simple PCA of log-transformed counts fails to account for batch effects. 
Therefore, the need still remains for fast and memory-efficient matrix factorization tools on the original count scale able to include covariates, such as batch effects.

\subsection{A matrix factorization perspective}
\label{subsec:matrix_factorization}

The above-mentioned factor analysis models are examples of matrix factorization, a statistical tool of fundamental importance in many theoretical and applied fields.
In general, matrix factorization methods aim at decomposing the target data matrix into the product of two lower-rank matrices explaining the principal modes of variation in the observations.
Such low-rank matrices are typically interpreted in terms of \emph{factors} and \emph{loadings}. 
Factors represent stochastic latent variables, i.e., random effects, lying in a small-dimensional space and determining the individual characteristics of each observation in the sample.
Loadings are non-stochastic coefficients mapping the latent factors into the observed data space, or some one-to-one transformation thereof.
Specifically, in single-cell RNA-seq applications, factors and loadings can be interpreted as \emph{meta-genes} and \emph{gene-weights}, respectively \citep{Brunet2004,Stein2018}.
In this context, matrix factorization is usually employed to project the cells into the space of the first few factors to then cluster them in discrete groups \citep{Kiselev2019} or infer other low-rank signal, such as pseudotime ordering \citep{Street2018}.

Principal component analysis (PCA; \citealp{jolliffe1986}), i.e., singular value decomposition (SVD), plays a central role in the literature, being the first and most used factorization method proposed in the field.
It constitutes the base for several generalizations such as probabilistic PCA \citep{Tipping1999}, factor analysis, and, more generally, generalized linear latent variable models (GLLVM, \citealp{Bartholomew2011}).
Probabilistic formulations equip PCA with a data-generating mechanism that opens the door to alternative inferential procedures, such as likelihood-based and Bayesian approaches.
Moreover, it provides a natural way to simulate new synthetic signals using a proper generating mechanism induced by the likelihood specification.
Assuming a Gaussian law for the data, PCA can be formulated as the solution to a likelihood maximization problem under appropriate identifiability constraints \citep[see, e.g.,][]{Tipping1999}. 
This perspective unveils why PCA may be suboptimal for non-Gaussian data, such as positive scores, counts, or binary observations.

Over the years, many extensions have been proposed to address the limitations of the Gaussian PCA.
Some relevant examples are non-negative matrix factorization \citep{Lee1999, Wang2012}, Binary PCA \citep{Schein2003}, Poisson PCA \citep{Durif2019, Smallman2020, Kenney2021, Virta2023}, exponential family PCA \citep{Collins2001, Mohamed2008, Li2010, Gopalan2015, Wang2020}, generalized linear latent variable models \citep{Niku2017}, generalized factor model \citep{Liu2023, Nie2024}, covariate-augmented overdispersed Poisson factor model \citep{Liu2024}, generalized PCA \citep{Townes2019}, generalized matrix factorization \citep{Kidzinski2022} and deviance matrix factorization \citep{Wang2023}.

Non-negative matrix factorization (NMF; \citealp{Lee1999, Wang2012}) decomposes positive score matrices by minimizing either the squared error loss or the Kullbak--Leibler loss under non-negativity constraints for the factor and loading matrices.
The exploitation of non-negative patterns proved successful in many applied fields, such as computer vision and recommendation systems, as well as omics feature extraction \citep{Brunet2004,Stein2018}.

Similarly, methods based on exponential family models extend PCA by assuming a more general loss function and linking the data to the latent matrix decomposition using a smooth bijective transformation. 
For instance, \cite{Collins2001} considered the exponential family Bregman divergence, while \cite{Townes2019}, \cite{Kidzinski2022} and \cite{Wang2023} considered the exponential family (negative) log-likelihood. 

Exponential family generalizations of PCA, which we refer to as \emph{generalized matrix factorization} (GMF) models, are typically estimated using alternated Fisher scoring algorithms implemented via iterative re-weighted least squares, or some modification thereof; see, e.g., \cite{Collins2001}, \cite{Kidzinski2022} and \cite{Wang2023}.
Such a procedure directly extends the classical Fisher scoring algorithm for generalized linear models \citep{McCullagh1989}, being a stable and easy-to-code algorithmic approach.
On the other hand, in high-dimensional settings, this iterative procedure suffers two major flaws: (i) it requires multiple scans of the entire dataset during each iteration; (ii) it requires the costly numerical solution of several linear systems. 
It is worth noting that, for matrix factorization problems, the number of parameters to update, i.e. the number of linear systems to solve, is proportional to the dimension of the data matrix.
To address points (i) and (ii) in the context of scRNA-seq, \cite{Townes2019} employed an alternated Fisher scoring method which only requires the computation of element-wise derivatives and matrix multiplications, avoiding expensive matrix inversions.
In the same vein, \cite{Kidzinski2022} proposed a quasi-Newton algorithm that only requires cheap element-wise algebraic calculations.
To the best of our knowledge, these are the most efficient methods proposed in the literature for the estimation of matrix factorization models under exponential family likelihood.  
However, in both cases, a complete pass through the data is still necessary to completely update the parameter estimates, which might be infeasible in massive data scenarios.

\subsection{Our contribution}

In this work, we propose a scalable stochastic optimization algorithm to tackle the complex optimization problem underlying the estimation of high-dimensional GMF models.
Specifically, the proposed algorithm relies on the stochastic gradient descent (SGD) framework \citep{Robbins1951, Bottou2010} with adaptive learning rate schedules \citep{Duchi2011, Zeiler2012, Kingma2014, Reddi2019}.
By doing this, we decrease the computational complexity of the problem by using a convenient combination of minibatch subsampling, partial parameter updates and exponential gradient averaging.
Additionally, we propose two efficient initialization methods, which help the convergence to a meaningful solution while reducing the likelihood of the algorithm's convergence to highly sub-optimal stationary points.

\begin{table}
    \footnotesize
    \caption{
        \label{tab:package_survey}
        List of the major matrix factorization models available in the literature for exponential family data.
        For each model, we report the corresponding \R~package (first column) and we describe its characteristics.
        Each feature is marked with \cmark~if it is completely implemented in the package and \xmark~otherwise.
        The column \emph{Model} indicates the broad model family implemented in the package, where GMF and NMF stand for generalized matrix factorization, and non-negative matrix factorization, respectively.
        The column \emph{Families} lists the most common distributions belonging to the exponential family along with some generalizations: Normal (N), Gamma (G), Binomial (B), Poisson (P), Negative Binomial (NB), quasi-likelihood (Q) and zero-inflated (ZI) models.
        The column \emph{Effects} refers to the regression effects that can be included in the linear predictor and uses the notation introduced in Section \ref{sec:model_specification}, Equation \eqref{eq:linear_predictor_model}.
        The column \emph{Implementation} describes some technical features of the numerical implementation, such as the language used for the core computations (\emph{Core}), if automatic missing value estimation is allowed (\emph{Missing}), if parallel computing is allowed (\emph{Parallel}), and if minibatch subsample and stochastic optimization methods are available (\emph{Stochastic}).
        }
    \begin{tabular}{ll ccccccc ccc cccc}    
        \toprule
        {\bf Package} & {\bf Model} & \multicolumn{7}{c}{\bf Families} & \multicolumn{3}{c}{\bf Effects} & \multicolumn{4}{c}{\bf Implementation} \\
        \cmidrule(lr){3-9} \cmidrule(lr){10-12} \cmidrule(lr){13-16}
        & & N & G & B & P & NB & Q & ZI & $\bX \bB^{\scriptscriptstyle \top}$ & $\bGamma \bZ^{\scriptscriptstyle \top}$ & $\bU \bV^{\scriptscriptstyle \top}$ & Core & Missing & Parallel & Stochastic \\
        \midrule
        \texttt{RSpectra} & PCA & \cmark & \xmark & \xmark & \xmark & \xmark & \xmark & \xmark & \xmark & \xmark & \cmark & \texttt{F}/\Cpp & \xmark & \xmark & \cmark \\
        \texttt{NMF}      & NMF & \cmark & \xmark & \xmark & \cmark & \xmark & \xmark & \xmark & \xmark & \xmark & \cmark & \Cpp & \xmark & \cmark & \xmark \\
        \texttt{NNLM}     & NMF & \cmark & \xmark & \xmark & \cmark & \xmark & \xmark & \xmark & \xmark & \xmark & \cmark & \Cpp & \cmark & \cmark & \xmark \\
        \texttt{cmfrec}   & NMF & \cmark & \xmark & \xmark & \cmark & \xmark & \xmark & \xmark & \cmark & \cmark & \cmark & \Cpp & \xmark & \cmark & \xmark \\
        \texttt{gllvm}    & GMF & \cmark & \cmark & \cmark & \cmark & \cmark & \xmark & \xmark & \cmark & \cmark & \cmark & \R   & \xmark & \xmark & \xmark \\
        \texttt{GFM}      & GMF & \cmark & \xmark & \cmark & \cmark & \xmark & \xmark & \xmark & \cmark & \xmark & \cmark & \Cpp & \xmark & \xmark & \xmark \\
        \texttt{COAP}     & GMF & \xmark & \xmark & \xmark & \cmark & \xmark & \xmark & \xmark & \cmark & \xmark & \cmark & \Cpp & \xmark & \xmark & \xmark \\
        \texttt{glmpca}   & GMF & \xmark & \xmark & \cmark & \cmark & \cmark & \xmark & \xmark & \cmark & \cmark & \cmark & \R   & \xmark & \xmark & \cmark \\
        \texttt{zinbwave} & GMF & \xmark & \xmark & \xmark & \xmark & \cmark & \xmark & \cmark & \cmark & \cmark & \cmark & \R   & \xmark & \cmark & \xmark \\
        \texttt{newwave}  & GMF & \xmark & \xmark & \xmark & \xmark & \cmark & \xmark & \xmark & \cmark & \cmark & \cmark & \R   & \xmark & \cmark & \cmark \\
        \texttt{gmf}      & GMF & \cmark & \cmark & \cmark & \cmark & \xmark & \xmark & \xmark & \cmark & \xmark & \cmark & \R   & \cmark & \cmark & \xmark \\
        \texttt{dmf}      & GMF & \cmark & \cmark & \cmark & \cmark & \cmark & \cmark & \xmark & \cmark & \xmark & \cmark & \R   & \xmark & \xmark & \xmark \\
        \rowcolor{lightgray!35}
        \texttt{sgdGMF}   & GMF & \cmark & \cmark & \cmark & \cmark & \cmark & \cmark & \xmark & \cmark & \cmark & \cmark & \Cpp & \cmark & \cmark & \cmark \\
        \bottomrule
    \end{tabular}
\end{table}

Alongside, we propose an efficient \R/\Cpp~implementation of the proposed method in the new open-source \R~package \sgdGMF, freely available on CRAN\footnote{See \href{https://CRAN.R-project.org/package=sgdGMF}{\texttt{https://CRAN.R-project.org/package=sgdGMF}}}.
Compared to alternative implementations in \R, \sgdGMF~offers one of the most complete and flexible estimation frameworks for generalized matrix factorization modelling, allowing for all standard exponential family distributions, quasi-likelihood models, row- and column-specific regression effects, as well as model-based missing value imputation.
Moreover, it provides several algorithms for parameter estimation, including the proposed stochastic gradient descent approach.
To enhance scalability, parallel computing is employed as much as possible at any stage of the analysis, including initialization, estimation, model selection and post-processing.
Table \ref{tab:package_survey} compares all these features with alternative packages freely available in \R.

We showcase the \sgdGMF~implementation on both simulated and real data, demonstrating the scalability of the proposed method on gene expression matrices of different dimensions.
In all the numerical experiments we consider, the proposed stochastic gradient approach outperforms the alternative state-of-the-art methods in terms of execution time while having superior signal reconstruction quality, measured as out-of-sample residual deviance, logarithmic root mean squared error, and cluster separation in the latent space.

The paper is organized as follows. 
In Section \ref{sec:model_specification} we formally define the class of generalized matrix factorization models, we formulate the associated estimation problem and we discuss the connections with other models in the literature.
In Section \ref{sec:estimation_algorithms}, we introduce the proposed stochastic optimization method building upon quasi-Newton and stochastic gradient descent algorithms; additionally, we propose an efficient approach for parameter initialization.
In Section \ref{sec:additional_computational_aspects}, we briefly discuss some additional computational aspects, such as initialization and model selection.
In Section \ref{sec:simulation_studies}, we empirically compare the proposed algorithm with several state-of-the-art methods in the literature through an extensive simulation study.
In Section \ref{sec:real_data_applications}, we propose two case studies on real datasets of medium- and high-dimensional sizes, respectively, proving the effectiveness of our approach to extract coherent biological signals while maintaining a high level of computational efficiency. 
Section \ref{sec:discussion} is devoted to a concluding discussion and future research directions.

\section{Model specification}%
\label{sec:model_specification}

Let us define $\bY = \{ y_{ij} \}$ as the $n \times m$ data matrix containing the response variables of interest, which have $(i,j)$th entry $y_{ij} \in \cY \subseteq \real$, $i$th row $\by_{i:} \in \cY^m$ and $j$th column $\by_{:j} \in \cY^n$.
Conventionally, $y_{ij}$ is here considered as the $i$th observational unit over the $j$th variable, that in our genomic application are the $i$th cell and the $j$th feature/gene.
Hereafter, all the vectors are column vectors and all the transposed vectors, denoted by $\cdot^\top$, are row vectors.
To account for non-Gaussian observations, such as odds, counts, or continuous positive scores, we consider for the response variable $y_{ij}$ an \emph{exponential dispersion family} (EF) distribution with \textit{natural parameter} $\theta_{ij}$ and \textit{dispersion} $\phi$, denoted by
\begin{equation}
    \label{eq:exponential_family_distribution}
    y_{ij} \mid \theta_{ij} \sim \EF(\theta_{ij}, \phi), 
    \quad i = 1, \dots, n,
    \quad j = 1, \dots, m.
\end{equation}
Moreover, we denote the mean and variance of $y_{ij}$ by $\E(y_{ij}) = \mu_{ij}$ and $\Var(y_{ij}) = a_{ij}(\phi) \,\nu(\mu_{ij})$, respectively.
Here and elsewhere, $\nu(\cdot)$ is the family-specific variance function, which controls the heteroscedastic relationship between the mean and variance of $y_{ij}$, while $a_{ij}(\cdot)$ is a dispersion function specified as $a_{ij}(\phi) = \phi / w_{ij}$, with $w_{ij} > 0$ being a user-specified weight and $\phi > 0$ being a scalar dispersion parameter.
Under such a model specification, the probability density function of $y_{ij}$ can be written as
\begin{equation}
    \label{eq:exponential_family_density}
    f(y_{ij}; \theta_{ij}, \phi) = \exp \big[ \{ y_{ij} \theta_{ij} - b(\theta_{ij}) \} / a_{ij}(\phi) + c(y_{ij}, \phi) \big],
\end{equation}
where $b(\cdot)$ and $c(\cdot,\cdot)$ are family-specific functions. 
Specifically, $b(\cdot)$ is a convex twice differentiable \textit{cumulant function} and $c(\cdot,\cdot)$ is the so-called \textit{log-partition function}.
By the fundamental properties of the dispersion exponential family, the mean and variance of $y_{ij}$ satisfy the identities $\mu_{ij} = \dt{b}(\theta_{ij})$ and $\nu(\mu_{ij}) = \ddt{b}(\theta_{ij})$, which imply $\nu(\cdot) = (\ddt{b} \circ \dt{b}^{-1})(\cdot)$, where $\dt{b}(\cdot)$ and $\ddt{b}(\cdot)$ denote the first and second derivatives of $b(\cdot)$, respectively.
Therefore, $\dt{b}(\cdot)$ is a bijective map and $\theta_{ij}$ is uniquely determined by $\mu_{ij}$.
We recall that the deviance function of model \eqref{eq:exponential_family_density} is defined as $D (y, \mu) = - 2 \log\{ f_\phi (y,\mu) / f_\phi (y,y) \}$, where $f_\phi (y,\mu) = f(y; \theta, \phi)$ denotes the likelihood function relative to observation $y$ expressed as a function of mean $\mu$ and possibly depending on dispersion $\phi$.

Some relevant distributions belonging to the exponential family are, among others, the Gaussian, Inverse Gaussian, Gamma, Poisson, Binomial, and Negative Binomial laws (see Supplementary Table \ref{tab:exponential_family_distributions} for family-specific variance functions and the associated canonical link and deviance functions of these examples).
Of particular interest for omics applications are the Poisson and Negative Binomial distributions, as the gene expressions are typically obtained from technologies that yield count data (e.g., RNA sequencing or in-situ hybridization). However, formulating the model in the general form of the exponential family allows for a more general treatment and for the application to other omics technologies, such as mass spectrometry and microarray assays, which yield continuous readouts.

We complete the model specification by introducing an appropriate parametrization for the conditional mean $\mu_{ij}$.
In particular, we consider the generalized multivariate regression model
\begin{equation}
    \label{eq:linear_predictor_model}
    g(\mu_{ij}) 
    = \eta_{ij} 
    = (\bX \bB^\top + \bGamma \bZ^\top + \bU \bV^\top)_{ij}
    = \bx_{i:}^\top \bbeta_{:j} + \bgamma_{i:}^\top \bz_{:j} + \bu_{i:}^\top \bv_{:j},
\end{equation}
where $g(\cdot)$ is a continuously differentiable bijective link function and $\eta_{ij}$ is a linear predictor.
The latter is represented as an additive decomposition of three terms: a column-specific regression effect, $(\bX \bB^\top)_{ij} = \bx_{i:}^\top \bbeta_{:j}$, a row-specific regression effect, $(\bGamma \bZ^\top)_{ij} = \bgamma_{i:}^\top \bz_{:j}$, and a residual matrix factorization $(\bU \bV^\top)_{ij} = \bu_{i:}^\top \bv_{:j}$.
Specifically, $\bx_{i:} \in \real^p$ and $\bz_{:j} \in \real^q$ denote observed covariate vectors, $\bbeta_{:j} \in \real^p$ and $\bgamma_{i:} \in \real^q$ are unknown regression coefficient vectors, while $\bu_{i:} \in \real^d$ and $\bv_{:j} \in \real^d$ encode latent traits explaining the residual modes of variation in the data that are not captured by the regression effects.
Finally, we introduce the vector of unknown parameters in the model as $\bpsi = (\bbeta^\top, \bgamma^\top, \bu^\top, \bv^\top, \phi)^\top$, where the lower-case letters represent the flat vectorization of the corresponding matrix forms; for instance, $\bbeta = \vec(\bB) \in \real^{pm}$.

Specifically, in the scRNA-seq context, the regression term $\bx_{i:}^\top \bbeta_{:j}$ is often used to control for cell-specific technical confounders, such as batch effects and individual characteristics in multi-subject studies, with $\bx_{i:}$ containing indicator variables of the group assignment and group-specific attributes. 
Similarly, the second regression term, $\bgamma_{i:}^\top \bz_{:j}$, can account for gene-specific information contained in the covariate vector $\bz_{:j}$, such as GC-content or known functional interactions between genes. 
Notice that, even in the absence of external confounders, such regression effects allow for the inclusion of cell- and gene-specific intercepts, say $\eta_{ij} = \beta_{0j} + \gamma_{0i} + \bu_{i:}^\top \bv_{j:}$, which play the role of centering factors in the linear predictor metric. 
In scRNA-seq, the cell-specific intercept $\gamma_{0i}$ is of particular importance, as it acts as a scale factor that allows the model to be applied to raw counts rather than library size normalized data \citep{Risso2018}, which is critical since normalization can lead to distortions \citep{Townes2019}.

The latent traits $\bu_{i:}$ can be interpreted as \emph{meta-gene} variables representing the fundamental biological characteristics of the $i$th cell in a latent low-dimensional space that accounts for the covariates effects, with interpretation similar to usual PCA. For example, the meta-gene variables can be used for data visualization, cell clustering, and lineage reconstruction.
It is worth noting that, unlike formulation \eqref{eq:linear_predictor_model}, most factorization models proposed in the literature and implemented in standard computing environments do not support the inclusion of row effects $\bx_{i:}^\top \bbeta_{:j}$'s, column effects $\bgamma_{i:}^\top \bz_{:j}$'s, or both, resulting in latent representations that still capture variability of unwanted confounders, such as batch effects, or that do not include a normalization factor.
This limitation applies to \GMF~\citep{Kidzinski2022}, \DMF~\citep{Wang2023}, \GFM~\citep{Liu2023, Nie2024}, \COAP~\citep{Liu2024}, \NMF~\citep{Gaujoux2010}, and \NNLM~\citep{Lin2020}, as summarized in Table~\ref{tab:package_survey}.

Following the naming convention introduced by \cite{Kidzinski2022}, we refer to the model specification presented so far in \eqref{eq:exponential_family_distribution}--\eqref{eq:linear_predictor_model} as \emph{generalized matrix factorization} (GMF).
Alternative nomenclatures, such as \emph{exponential family principal component analysis} (EPCA; \citealp{Collins2001, Mohamed2008, Li2010}), \emph{generalized low-rank models} (GLRM; \citealp{Udell2016}), \emph{generalized linear latent variable model} (GLLVM; \citealp{Niku2017, Hui2017}), \emph{generalized principal component analysis} (glmPCA; \citealp{Townes2019}), \emph{deviance matrix factorization} (DMF; \citealp{Wang2023}), or \emph{generalized factor model} (GFM; \citealp{Liu2023, Liu2024, Nie2024}), can also be found in the literature.
Straightforward extensions of the GMF specification include pseudo-likelihood models, with $f(y_{ij}; \bpsi) = \exp\{ - L_\phi(y_{ij}, \eta_{ij}) \}$ being the negative exponential of a loss function $L_\phi(\cdot, \cdot) : \cY \times \real \rightarrow \real_+$.
Also, vector generalized estimating equations for overdispersed data are a particular case of this more general setup.

\subsection{Parameter identifiability}
\label{subsec:parameter_identifiability}

The multivariate generalized linear model in \eqref{eq:linear_predictor_model} is non-identifiable, allowing for column-space overlapping, and being invariant with respect to rotation, scaling, and sign-flip transformations of $\bU$ and $\bV$. 
Then, to enforce the uniqueness of the matrix decomposition, we need to impose additional identifiability constraints.
First, we need to impose the orthogonality of the parameters with respect to the covariate column space:
\begin{itemize}
    \item[(A)]\label{ass:covariate_space_orthogonality} $\bX$ and $\bZ$ are full-column rank matrices, moreover $\bX^\top \bGamma = \bzero$, $\bX^\top \bU = \bzero$, and $\bZ^\top \bV = \bzero$.
\end{itemize}
The first conditions ensure that $\bX^\top \bX$ and $\bZ^\top \bZ$ are non-singular, which is a standard requirement for regression models. 
Conditions $\bX^\top \bGamma = \bzero$ and $\bX^\top \bU = \bzero$ prevent $\bGamma$ and $\bU$ from spanning the same column space of $\bX$, and similarly $\bZ^\top \bV = \bzero$ prevents $\bV$ from spanning the same column space of $\bZ$ \citep[see, e.g.,][]{Liu2024}.

Then, we must ensure the identifiability of $\bU$ and $\bV$ with respect to rotation, scaling, and sign-flip.
To this end, some of the most common choices in the literature involve the following equivalent parameterizations:
\begin{itemize}
    \setlength\itemsep{-0.05cm}
    \item[(B1)]\label{ass:orthogonal_loadings} $\bU$ has orthogonal columns, $\bU^\top \bU = \bSigma$, $\bV$ has orthonormal columns, $\bV^\top \bV = \bI_d$, the first non-zero element of each column of $\bV$ is positive;
    \item[(B2)]\label{ass:orthogonal_scores} $\bU$ has orthonormal columns, $\bU^\top \bU = \bI_d$, $\bV$ has orthogonal columns, $\bV^\top \bV = \bSigma$, the first non-zero element of each column of $\bU$ is positive;
    \item[(B3)]\label{ass:standardized_scores} $\bU$ has standardized columns, say $\Var(\bU) = \frac{1}{n} \bU^\top (\bI_n - \frac{1}{n} \bone_n \bone_n^\top) \bU = \bI_d$, and $\bV$ is lower triangular with positive diagonal entries.
\end{itemize}
Here, $\bSigma = \diag(\sigma_1, \dots, \sigma_d)$ denotes the diagonal matrix collecting all the non-zero singular values of $\bU \bV^\top$ in decreasing order, $\bI_d$ denotes the $d \times d$ identity matrix, $\bzero_d$ is the $d \times 1$ vector of zeros, and $\bone_n$ is the $n \times 1$ vector of ones.
In Appendix~\ref{app:parameter_identifiability} in the Supplementary Material, we prove that constraints (A), together with one of (B1), (B2), or (B3), are sufficient to ensure the identifiability of the model parameters.
Notice that any unconstrained estimate of $\bB$, $\bGamma$, $\bU$ and $\bV$ can be easily projected into the constrained space induced by the identifiability restrictions via post-processing; see Appendix~\ref{app:parameter_identifiability} in the Supplementary Material, along with, e.g., \cite{Kidzinski2022}, \cite{Wang2023}, and \cite{Liu2024}.
The choice of the parametrization typically depends on the specific application and the desired interpretation of the score and loading matrices.
(B1) is the standard parametrization in principal component analysis, (B2) is the usual parametrization in spectral analysis of digital signals, and (B3) is the most common parametrization in the factor model literature.
In our numerical experiments (Sections~\ref{sec:simulation_studies} and~\ref{sec:real_data_applications}), we use parametrization (B1), which is conventional in the RNA-seq literature; see, e.g., \cite{Risso2018}, \cite{Townes2019}, and \cite{Ahlmann2023}.

\subsection{Penalized likelihood estimation}
\label{subsec:penalized_likelihood_estimation}

In the statistical literature, the variables $\bu_{i:}$ are called \emph{latent factors} and, typically, are assumed to follow a standard independent $d$-variate Gaussian distribution.
This representation provides a complete specification of the probabilistic mechanism that generated the samples $\by_{i:}$'s, which are conditionally independent given $\bu_{i:}$'s.
The marginal log-likelihood function induced by such a latent variable representation is given by
\begin{equation}
    \label{eq:marginal_likelihood_function}
    \sum_{i = 1}^{n} \log \int_{\real^d} \Bigg[ \prod_{j = 1}^{m} f(y_{ij} \mid \bu_{i:}) \Bigg] f(\bu_{i:}) \,\d\bu_{i:},
\end{equation}
where $f(y_{ij} \,|\, \bu_{i:}) = f(y_{ij}; \theta_{ij}, \phi)$ is the conditional distribution of $y_{ij}$ given $\bu_{i:}$, while $f(\bu_{i:}) = \exp\{ - \bu_{i:}^\top \bu_{i:} / 2 \} / (2 \pi)^{d/2}$ is the marginal probability density function of $\bu_{i:}$.

The unknown non-stochastic parameters can be estimated via maximum likelihood by optimizing \eqref{eq:marginal_likelihood_function}.
To this end, many numerical approaches have been proposed in the literature, such as Laplace approximation \citep{Huber2004, Bianconcini2012}, adaptive quadrature \citep{Cagnone2013}, expectation-maximization \citep{Sammel1997, Cappe2009}, variational approximation \citep{Hui2017}.
In practice, all these strategies perform very well in terms of accuracy but are extremely computationally expensive and do not scale well in high-dimensional problems.

An alternative approach is to treat the latent factors as if they were non-stochastic parameters and estimate them together with the other unknown coefficients.
\cite{Kidzinski2022} motivated this approach as a form of \emph{penalized quasi-likelihood} (PQL, \citealp{Breslow1993}), which is a standard tool in the estimation of \emph{generalized linear mixed models} (GLMM, \citealp{Lee2017}).
Formally, the PQL estimate of $\bpsi$, say $\hat{\bpsi}$, is the solution of
\begin{equation}
    \label{eq:maximum_likelihood_estimate}
    \hat\bpsi = \argmin_{\bpsi \in \Psi} \;\{ \ell_\lambda (\bpsi; \by) \},
\end{equation}
with $\Psi$ being the parameter space of $\bpsi$, which incorporate the identifiability constrains, and $\ell_\lambda (\bpsi; \by)$ denoting the penalized negative log-likelihood function. 
The latter is given by
\begin{equation}
    \label{eq:penalized_likelihood_function}
    \ell_\lambda (\bpsi; \by) = - \sum_{i = 1}^{n} \sum_{j = 1}^{m} \log f(y_{ij}; \theta_{ij}, \phi) + \frac{\lambda}{2} \| \bU \|_F^2 + \frac{\lambda}{2} \| \bV \|_F^2,
\end{equation}
where $\lambda > 0$ is a regularization parameter and $\| \cdot \|_F^2$ denotes the Frobenius norm.
Frobenius penalization is often introduced for numerical stability issues, but it is also intrinsically connected to the rank determination problem.
Indeed, for any $\lambda > 0$, the Frobenius penalty implicitly shrinks the singular values of the matrix factorization toward zero, encouraging compact low-rank representations of the signal.
This spectral penalization effect and its connection with the nuclear norm have been extensively discussed in, e.g., \cite{Witten2009}, \cite{Mazumder2010}, and \cite{Kidzinski2022}.

In a complete data scenario, the number of observed data entries in the response matrix is $nm$ and the total number of unknown parameters to be estimated is $pm + qn + d(n+m) + 1$. 
In partially observed data cases, the sum over $i \in \{1, \dots, n\}$ and $j \in \{ 1, \dots, m \}$ in \eqref{eq:penalized_likelihood_function}, can be easily replaced with a sum over $(i,j) \in \Omega$, where $\Omega \subseteq \{1, \dots, n \} \times \{1, \dots, m\}$ is the set collecting the index-position of all the observed data in the response matrix.

The optimization problem \eqref{eq:maximum_likelihood_estimate} is not jointly convex in $\bU$ and $\bV$. 
However, objective function \eqref{eq:penalized_likelihood_function} is bi-convex, namely it is conditional convex in $\bU$ given $\bV$, and \emph{vice versa}.
This characteristic naturally encourages the development of iterative methods, which cycle over the alternated updates of $\bU$ and $\bV$ until convergence to a local stationary point.
Similar strategies are commonly used in matrix completion and recommendation systems for the estimation of high-dimensional matrix factorization models; see, e.g., \cite{Zou2006}, \cite{Koren2009} and \cite{Mazumder2010}.

\subsection{Related models}
\label{subsec:related_models_and}

The GMF specified in \eqref{eq:exponential_family_distribution}--\eqref{eq:linear_predictor_model} has strict connections and similarities to several models in the literature.
It extends \emph{vector generalized linear models} (\vglm, \citealp{Yee2015}), and hence also univariate generalized linear models \citep{McCullagh1989}, by introducing a second regression effect, $(\bGamma \bZ^\top)_{ij}$, and a latent matrix factorization, $(\bU \bV^\top)_{ij}$, in the linear predictor, to account for additional modes of variations and residual dependence structures.

Also, GMF directly generalizes \emph{principal component analysis} (PCA), which, by definition, is the solution of the minimization problem
\begin{equation*}
    \min_{\bu, \bv} \;\frac{1}{2} \sum_{i = 1}^{n} \sum_{j = 1}^{m} (y_{ij} - \bu_{i:}^\top \bv_{j:})^2
    \quad\text{subject to}\quad \bU^\top \bU = \bI_d, \quad \bV^\top \bV = \bDelta^2.
\end{equation*}
Then, in the GMF notation, PCA can be obtained by assuming a Gaussian distribution for the data matrix, an identity link function for the mean and no regression effects in the linear predictor, namely $(\bX \bB^\top)_{ij} = 0$ and $(\bGamma \bZ^\top)_{ij} = 0$.
Generalizations of PCA, such as Binary PCA \citep{Schein2003, Lee2010, Landgraf2015, Song2019} and Poisson PCA \citep{Kenney2021, Virta2023}, are also included in the GMF framework.

Close connections can also be drawn with \emph{non-negative matrix factorization} (NMF; \citealp{Wang2012}), which, in its more common formulations, searches for the best low-rank approximation $\bU \bV^\top$ of the data matrix $\bY$ by minimizing either the squared error loss or the Poisson deviance under non-negativity constraints for $\bU$ and $\bV$.
Formally, the NMF solution is defined as 
\begin{equation*}
    \min_{\bu, \bv} \;\sum_{i = 1}^{n} \sum_{j = 1}^{m} L(y_{ij}, \bu_{i:}^\top \bv_{j:})
    \quad\text{subject to}\quad \bU \geq 0, \quad \bV \geq 0,
\end{equation*}
with $L(\cdot, \cdot)$ being either the squared error loss, i.e. the Gaussian deviance, or the Kullback-Leibler loss, i.e. the Poisson deviance.
This representation clarifies that NMF can be written as a particular instance of GMF for non-negative data, where Gaussian/Poisson likelihood is used together with an identity link and non-negativity constraints.
Extensions of basic NMF have also been proposed to introduce external information through the inclusion of covariates; see, e.g., \emph{collective matrix factorization} and \emph{content-aware recommendation systems} \citep{Singh2008, Cortes2018}.

\section{Estimation algorithm}%
\label{sec:estimation_algorithms}

For the sake of exposition, throughout this section, we assume without loss of generality that $\bB = \bzero$ and $\bGamma = \bzero$.
Moreover, we consider $\phi$ as a known fixed parameter, noting that, if unknown, it can be estimated iteration-by-iteration using either a method of moments or via maximum likelihood. 
More details about the general case $\bB \neq \bzero$ and $\bGamma \neq \bzero$ and the estimation of $\phi$ are provided in Appendix~\ref{app:algorithmic_details} in the Supplementary Material.
Finally, we introduce the matrices $\dt\bD = \{\dt{D}_{ij}\} = \{\partial \ell_\lambda / \partial \eta_{ij}\}$ and $\ddt\bD = \{\ddt{D}_{ij}\} = \{\partial^2 \ell_\lambda / \partial \eta_{ij}^2\}$ for the derivatives of the deviance with respect to the linear predictor, where
\begin{equation*}
    \dt{D}(y, \mu) = \frac{\partial \ell_\lambda}{\partial \eta} = \frac{w (y - \mu)}{\phi \,\nu(\mu) \,\dt{g}(\mu)}, \qquad
    \ddt{D}(y, \mu) = \frac{\partial^2 \ell_\lambda}{\partial \eta^2} = \frac{w \,\alpha(\mu)}{\phi \,\nu(\mu) \{ \dt{g}(\mu)\}^2},
\end{equation*}
and $\alpha(\mu) = 1 + (y - \mu) \{ \dt\nu(\mu) / \nu(\mu) + \ddt{g}(\mu) / \dt{g}(\mu) \}$.
Accordingly, we define $\dt{g} = \d{g} / \d\mu$, $\ddt{g} = \d^2{g} / \d\mu^2$ and $\dt\nu = \d\nu / \d\mu$.
The expected second-order derivative, i.e., the Fisher weight $\E (\ddt{D}_{ij})$, just corresponds to the observed second-order differential $\ddt{D}_{ij}$ with $\alpha(\mu_{ij}) = 1$, which is positive for any $y_{ij}$ and $\mu_{ij}$.

The penalized estimate $\hat\bpsi$ in \eqref{eq:maximum_likelihood_estimate}, say the vectorized concatenation of $\hat\bU$ and $\hat\bV$, must satisfy the first-order matrix conditions
\begin{equation}
    \label{eq:first_order_equations}
    \frac{\partial \ell_\lambda}{\partial \bU} = \dt{\bD} \,\bV + \lambda \bU = 0, \qquad
    \frac{\partial \ell_\lambda}{\partial \bV} = \dt{\bD}^\top \bU + \lambda \bV = 0,
\end{equation}
where the differentiation is performed element-wise. 
Using the same formulation, we may also express the second-order derivatives of \eqref{eq:penalized_likelihood_function} as
\begin{equation}
    \label{eq:second_order_derivatives}
    \frac{\partial^2 \ell_\lambda}{\partial \bU^2} = \ddt{\bD} \,(\bV * \bV) + \bLambda > 0, \qquad
    \frac{\partial^2 \ell_\lambda}{\partial \bV^2} = \ddt{\bD}^\top (\bU * \bU) + \bLambda > 0,
\end{equation}
where $*$ is the Hadamard product and $\bLambda$ is a matrix of appropriate dimensions filled by $\lambda$.

Conditionally on $\bV$, the left matrix equation in \eqref{eq:first_order_equations} can be decomposed into $n$ multivariate equations  row-by-row, $\partial \ell_\lambda / \partial \bu_{i:} = 0$ ($i = 1, \dots, n$), that can be solved independently in parallel.
The existence and uniqueness of the solution of each row-equation are guaranteed under mild regularity conditions on the exponential family and the link function.
In the same way, the right matrix equation in \eqref{eq:first_order_equations} can be split into $m$ independent vector equations, $\partial \ell_\lambda / \partial \bv_{j:} = 0$ ($j = 1, \dots, m$), to be solved in parallel.
See, e.g.,  \cite{Kidzinski2022} and \cite{Wang2023} for a detailed discussion and derivation on \eqref{eq:first_order_equations} and \eqref{eq:second_order_derivatives}.
Notice that, in the presence of covariate effects, the derivatives in \eqref{eq:first_order_equations} must be replaced by $\partial \ell_\lambda / \partial [\bGamma, \,\bU]$ and $\partial \ell_\lambda / \partial [\bB, \,\bV]$, as detailed in Appendix~\ref{app:algorithmic_details} in the Supplementary Material.

\subsection{Fisher scoring and quasi-Newton algorithms}
\label{subsec:fisher_scoring_and_quasi_newton}

The first, and most popular, algorithm introduced in the literature for finding the solution of \eqref{eq:first_order_equations} is the \emph{alternated iterative re-weighted least squares} (AIRWLS, \citealp{Collins2001, Li2010, Risso2018, Kidzinski2022, Wang2023, Liu2023}) method.
It cycles between the conditional updates of $\bU$ and $\bV$ by solving the equations in \eqref{eq:first_order_equations} in a row-wise manner, using standard Fisher scoring for GLMs \citep{McCullagh1989}.
The resulting routine is statistically motivated, easy to implement and allows for efficient parallel computing.

However, in massive data settings, it becomes infeasible when the dimension of the problem increases in the sample size or the latent space rank. 
One iteration of the algorithm, i.e., a complete update of $\bU$ and $\bV$, needs $O((n+m) d^3 + nmd)$ floating point operations to be performed, where the leading term proportional to $d^3$ comes from a matrix inversion that must be computed at least $n+m$ times per iteration.
This is particularly limiting in real-data applications, since $d$ is unknown \emph{a priori} and must be selected in a data driven way, which might require fitting the model several times and for an increasing number of $d$, which scales cubically.

To overcome this issue, \cite{Kidzinski2022} proposed a quasi-Newton algorithm which employs an approximate inversion only using the diagonal elements of the Fisher information matrix.
With this simplification, only elementary matrix operations are performed reducing the computational complexity to $O((n+m)d + nmd)$. 
In formulas, the quasi-Newton algorithm of \cite{Kidzinski2022} updates the parameter estimates at iteration $t$ as
\begin{equation}
    \label{eq:quasi_newton_update_psi}
    \bpsi^{t+1} \gets \bpsi^t + \rho_t \,\bDelta_\psi^t, \qquad 
    \bDelta_\psi^t = - (\bG_\psi^t \,/ \,\bH_\psi^t),
\end{equation}
where $\{ \rho_t \}$ is a sequence of learning rate parameters, $\bDelta_\psi^t$ is the search direction, while $\bG_\psi^t = \partial \ell_\lambda^t / \partial \bpsi$ and $\bH_\psi^t = \partial^2 \ell_\lambda^t / \partial \bpsi^2$ denote respectively the first two derivatives of $\ell_\lambda(\bpsi; \by)$ with respect to $\bpsi$ evaluated at $\bpsi^t$.
Throughout, $\gets$ stands for the assignment operator and the division is performed element-wise.
Exploiting the block structure of $\bpsi$, the joint update \eqref{eq:quasi_newton_update_psi} can be written in the coordinate-wise form
\begin{equation}
    \label{eq:quasi_newton_update}
    \begin{aligned}
        \bU^{t+1} \gets \bU^t + \rho_t \bDelta_\scu^t, \qquad & \bDelta_\scu^t = - (\bG_\scu^t \,/ \,\bH_\scu^t), \\
        \bV^{t+1} \gets \bV^t + \rho_t \bDelta_\scv^t, \qquad & \bDelta_\scv^t = - (\bG_\scv^t \,/ \,\bH_\scv^t).
    \end{aligned}
\end{equation}
where $\bG_\scu$, $\bG_\scv$, $\bH_\scu$ and $\bH_\scv$ can be obtained as in \eqref{eq:first_order_equations} and \eqref{eq:second_order_derivatives} \citep{Kidzinski2022}.
Overall, each quasi-Newton iteration requires $O((n+m)d + n m d)$ floating point operations and $O((n+m)d + n m)$ memory allocations.
To the best of our knowledge, this is the most efficient algorithm in the literature for the estimation of GMF models.

In what follows, we build upon the quasi-Newton algorithm of \cite{Kidzinski2022} to derive an efficient stochastic optimization method that, for our purposes, should further improve the scalability of GMF modelling in high-dimensional settings.

\subsection{Stochastic gradient descent}
\label{subsec:stochastic_gradient_descent}

Stochastic gradient descent (SGD, \citealp{Bottou2010}) provides an easy and effective strategy to handle complex optimization problems in massive data applications.
Similarly to deterministic gradient-based methods, SGD~is an iterative optimization procedure which updates the parameter vector $\bpsi$ until convergence following the approximate steepest descent direction, say $\bDelta_\psi^t = - \hat\bG_\psi^t$.
Here, the hat-notation, $\hat\bG_\psi^t$, stands for an unbiased stochastic estimate of $\bG_\psi^t$.
Under mild regularity conditions on the optimization problem and the learning rate sequence, specifically $\sum_{t = 0}^{\infty} \rho_t = \infty$ and $\sum_{t = 0}^{\infty} \rho_t^2 < \infty$, SGD~is guaranteed to converge to a stationary point of the objective function \citep{Robbins1951}.
A standard choice for the learning rate sequence is $\rho_t = k_0 / (1 + k_0 k_1 t)^{\tau}$ for $k_0, k_1 > 0$ and $\tau \in (1/2, 1]$, where $k_0$ is the initial stepsize, $k_1$ is a decay rate parameter, and $\tau$ determines the asymptotic tail behaviour of the sequence for $t \rightarrow \infty$. 

\subsubsection{Improving na{\"i}ve stochastic gradient}
\label{subsubsec:improving_naieve_stochastic_gradient}

In the past two decades, an active area of research built upon na\"{i}ve SGD~to improve its convergence speed, robustify the search path against erratic perturbations in the gradient estimate and introduce locally adaptive learning rate schedules using approximate second-order information.
Some examples are Nesterov acceleration~\citep{Nesterov1983}, \sgdqn~\citep{Bordes2009}, \adagrad~\citep{Duchi2011}, \adadelta~\citep{Zeiler2012}, \rmsprop~\citep{Hinton2012}, \adam~\citep{Kingma2014}, \amsgrad~\citep{Reddi2019}.
Inspired by this line of literature, we propose the following \emph{adaptive stochastic gradient descent} (\asgd) updating rule
\begin{equation}
    \label{eq:stochastic_quasi_newton_update_psi}
    \bpsi^{t+1} \gets \bpsi^t + \rho_t \,\bDelta_\psi^t, \qquad 
    \bDelta_\psi^t = - \alpha_t \,(\bar\bG_\psi^t \,/ \,\bar\bH_\psi^t),
\end{equation}
where $\bar\bG_\psi^t$ and $\bar\bH_\psi^t$ are smoothed estimates of the derivatives of \eqref{eq:penalized_likelihood_function}, while $\alpha_t$ is a scalar bias-correction factor.
Similar to \adam~\citep{Kingma2014} and \amsgrad~\citep{Reddi2019}, we update the smoothed gradients, $\bar\bG_\psi^t$ and $\bar\bH_\psi^t$, using an exponential moving average of the previous gradient values and the current stochastic estimates $\hat\bG_\psi^t$ and $\hat\bH_\psi^t$, namely
\begin{equation}
    \label{eq:gradient_exponential_averaging}
    \bar\bG_\psi^t \gets (1 - \alpha_1) \,\bar\bG_\psi^{t-1} + \alpha_1 \,\hat\bG_\psi^t, \qquad
    \bar\bH_\psi^t \gets (1 - \alpha_2) \,\bar\bH_\psi^{t-1} + \alpha_2 \,\hat\bH_\psi^t,
\end{equation}
with $\alpha_1, \alpha_2 \in (0,1]$ being user-specified smoothing coefficients, while $\alpha_t = (1 - \alpha_2^t) / (1 - \alpha_1^t)$ is introduced to filter out the bias induced by the exponential moving average in \eqref{eq:gradient_exponential_averaging}. 
Typical values for the smoothing coefficients are $\alpha_1 = 0.1$ and $\alpha_2 = 0.01$, respectively.
Such a smoothing technique has two main advantages: it speeds up the convergence using the inertia accumulated by previous gradients and, at the same time, it stabilizes the optimization, reducing the noise around the gradient estimate.
We refer the reader to \cite{Kingma2014} and \cite{Reddi2019} for a deeper discussion on the benefits of bias corrected exponential gradient averaging in high-dimensional stochastic optimization problems.

Different from the original implementation of \adam~\citep{Kingma2014}, which only involves first-order information, we scale the smoothed gradient estimate using a smoothed diagonal Hessian approximation.
This strategy allows us to directly extend the diagonal quasi-Newton algorithm of \cite{Kidzinski2022} in a stochastic vein without increasing the computational overload or the stability of the optimization. 
Indeed, the derivatives of the deviance function of a \glm~model are always well-defined, bounded away from zero, available in closed form and easy to compute.
This is not the case for deep neural networks and other machine learning models for which \adam~and \amsgrad~have been originally developed. 

Analogously to the quasi-Newton update \eqref{eq:quasi_newton_update} for GMF models, also \eqref{eq:stochastic_quasi_newton_update_psi} can be written in the block-wise form 
\begin{equation}
    \label{eq:stochastic_quasi_newton_update}
    \begin{aligned}
        \bU^{t+1} \gets \bU^t + \rho_t \,\bDelta_\scu^t, \qquad & \bDelta_\scu^t = - \alpha_t (\bar\bG_\scu^t / \bar\bH_\scu^t), \\
        \bV^{t+1} \gets \bV^t + \rho_t \,\bDelta_\scv^t, \qquad & \bDelta_\scv^t = - \alpha_t (\bar\bG_\scv^t / \bar\bH_\scv^t).
    \end{aligned}
\end{equation}
In this formulation, update \eqref{eq:quasi_newton_update} and \eqref{eq:stochastic_quasi_newton_update} have a computational complexity proportional to the dimension of the matrices $\bU$ and $\bV$, namely $O((n+m) d)$.
Since the dimension of the parameter space can not be further compressed without losing prediction power, the alternative way to speed up the computations is to efficiently approximate the gradients, $\hat\bG_\psi^t = (\hat\bG_\scu^t, \hat\bG_\scv^t)$ and $\hat\bH_\psi^t = (\hat\bH_\scu^t, \hat\bH_\scv^t)$.

Exploiting the additive structure of the penalized log-likelihood in \eqref{eq:penalized_likelihood_function}, a natural unbiased estimate of its derivatives can be obtained via sub-sampling.
In particular, we can estimate the log-likelihood gradient by summing up only a small fraction of the data contributions, which we refer to as a \emph{minibatch}, instead of the whole dataset.
This strategy is highly scalable and can be calibrated on the available computational resources.


In what follows, we discuss a new stochastic optimization method for GMF models based on formula \eqref{eq:stochastic_quasi_newton_update}, which uses local parameter updates along with minibatch estimates of the current gradients to reduce the computational complexity of the resulting algorithm.
To this end, we introduce the following notation: $I = \{ i_1, \dots, i_{\nrow} \} \subseteq \{1, \dots, n\}$ is a subset of row-indices of dimension $\nrow$, $J = \{ j_1, \dots, j_{\ncol} \} \subseteq \{ 1, \dots, m \}$ is a subset of column-indices of dimension $\ncol$, $B = I \times J$ is the Cartesian product between $I$ and $J$. 
Finally, $\bY_{\row :} = \{ \by_{i:} \}_{i \in I}$, $\bY_{: \col} = \{ \by_{:j} \}_{j \in J}$ and $\bY_{\mb} = \{ y_{ij} \}_{(i,j) \in B}$ denote the corresponding sub-matrices, also called \emph{row-}, \emph{column-} and \emph{block-minibatch} subsamples of the original data matrix, respectively.

\subsubsection{Block-wise adaptive stochastic gradient descent}
\label{subsubsec:blockwise_stochastic_gradient_descent}


Both the exact and stochastic quasi-Newton methods identified by equations \eqref{eq:quasi_newton_update} and \eqref{eq:stochastic_quasi_newton_update} entirely update $\bU$ and $\bV$ at each iteration.
Despite being an effective and parallelizable strategy, in many applicative contexts, when it comes to factorizing huge matrices, an entire update of the parameters could be extremely expensive in terms of memory allocation and execution time.
This is a well-understood issue in the literature on recommendation systems, where standard matrix completion problems may involve matrices with millions of rows and columns; see, e.g., \cite{Koren2009}, \cite{Mairal2010}, \cite{Recht2013}, \cite{Mensch2017}.
Moreover, batch optimization strategies do not generalize well to stream data contexts, where the data arrive sequentially and the parameters must be updated on-the-fly as new sets of observations come in.

A classic solution is to perform iterative element-wise SGD~steps using only one entry of the data matrix, $y_{ij}$, at each iteration, thus updating the low-rank decomposition matrices row-by-row through the paired equations $\bu_{i:}^{t+1} \gets \bu_{i:}^t - \rho_t \bG_{\scu, i:}^t$ and $\bv_{j:}^{t+1} \gets \bv_{j:}^t - \rho_t \bG_{\scv, j:}^t$.
In the matrix factorization literature, this strategy is also known as \emph{online SGD}~algorithm, since it permits updating the parameter estimates dynamically when a new observation is gathered.

To stabilize the optimization and speed up the convergence, we generalize the online SGD~approach in two directions: we consider local stochastic quasi-Newton updates in place of na\"{i}ve gradient steps, and we use block-wise minibatches of the original data matrix, $\bY_\mb$, instead of singletons, $y_{ij}$.
In formulas, the algorithm we propose cycles over the following adaptive gradient steps:
\begin{equation}
    \label{eq:blockwise_stochastic_quasi_newton_update}
    \begin{aligned}
        \bU_{\row :}^{t+1} \gets \bU_{\row :}^t + \rho_t \,\bDelta_{\scu, \row :}^t, \qquad & 
        \bDelta_{\scu, \row :}^t = - \alpha_t (\bar\bG_{\scu, \row :}^t / \bar\bH_{\scu, \row :}^t), \\
        \bV_{\col :}^{t+1} \gets \bV_{\col :}^t + \rho_t \,\bDelta_{\scv, \col :}^t, \qquad & 
        \bDelta_{\scv, \col :}^t = - \alpha_t (\bar\bG_{\scv, \col :}^t / \bar\bH_{\scv, \col :}^t).
    \end{aligned}
\end{equation}
The smoothed gradients are then estimated by exponential average as in \eqref{eq:gradient_exponential_averaging} and the minibatch stochastic gradients are obtained as
\begin{equation}
    \label{eq:blockwise_minibatch_gradients}
    \begin{aligned}
        \hat\bG_{\scu,\row :}^t = (m / \ncol) \,\dt\bD_{\mb}^t \bV_{\col :}^t + \lambda \bU_{\row :}^t, \qquad &
        \hat\bH_{\scu,\row :}^t = (m / \ncol) \,\ddt\bD_{\mb}^t (\bV_{\col :}^t * \bV_{\col :}^t) + \bLambda, \\
        \hat\bG_{\scv,\col :}^t = (n / \nrow) \,\dt\bD_{\mb}^{t \top} \bU_{\row :}^t + \lambda \bV_{\col :}^t, \qquad &
        \hat\bH_{\scv,\col :}^t = (n / \nrow) \,\ddt\bD_{\mb}^{t \top} (\bU_{\row :}^t * \bU_{\row :}^t) + \bLambda.
    \end{aligned}
\end{equation}
Here, $\nrow$ and $\ncol$ denote the number of rows and columns of each minibatch and $(m / \ncol) \,(\dt\bD_{\mb}^t \bV_{\col :}^t)_{ih} = (m / \ncol) \,\sum_{j \in J}^{} \dt{D}_{ij}^t v_{jh}^t$ is an unbiased stochastic estimate of $(\dt\bD_{}^t \bV)_{ih} = \sum_{j = 1}^{m} \dt{D}_{ij}^t v_{jh}^t$ for $i \in I$. 
Similarly, it is easy to show that the other minibatch averages in \eqref{eq:blockwise_minibatch_gradients} are unbiased estimates of the corresponding batch quantities.
Figure \ref{fig:minibatch_subsampling} provides a graphical representation of the updates, while Algorithm \ref{alg:block_sgd_algorithm} provides a pseudo-code description of the proposed procedure.
Overall, each iteration of Algorithm \ref{alg:block_sgd_algorithm} requires $O((\nrow + \ncol) d + \nrow \ncol d)$ floating point operations and $O((\nrow + \ncol) d + \nrow \ncol)$ memory allocations.

\begin{figure}[H]
    \centering
    \includegraphics[width = \textwidth]{./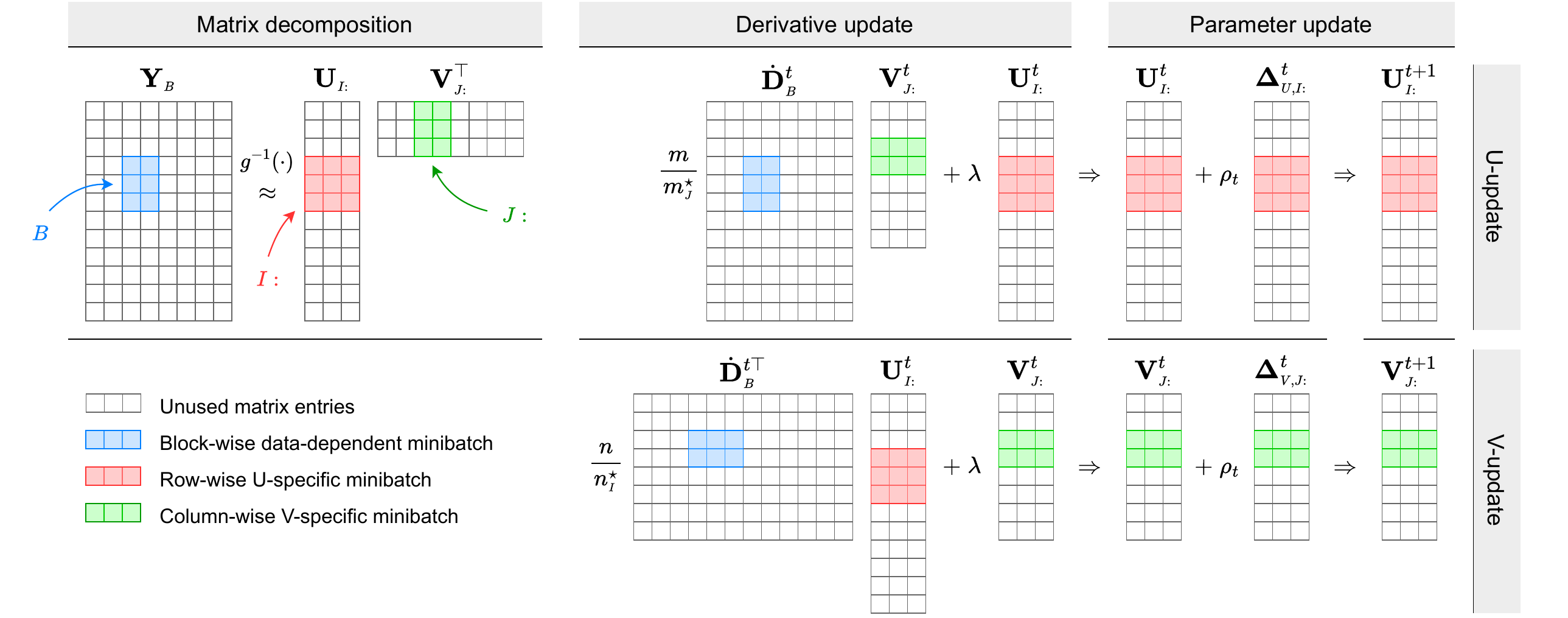}
    \caption{Graphical representation of the stochastic gradient updates employed at the $t$th iteration of Algorithm \ref{alg:block_sgd_algorithm}.
    Left: generalized matrix factorization model \eqref{eq:linear_predictor_model}.
    Middle: updates of the penalized log-likelihood gradients \eqref{eq:blockwise_minibatch_gradients}.
    Right: adaptive gradient step \eqref{eq:blockwise_stochastic_quasi_newton_update}. 
    The colored and empty cells highlight the sub-sampled data used and not used at the $t$th update, respectively.
    To save space, the calculations of the second-order differentials and the gradient smoothing are not displayed here.}
    \label{fig:minibatch_subsampling}
\end{figure}

\begin{algorithm}[t]
    \caption{%
        \label{alg:block_sgd_algorithm}
        Pseudo-code description of the block-wise adaptive SGD algorithm described in Section \ref{subsubsec:blockwise_stochastic_gradient_descent}.
        On the right, we report the computational complexity of each step.
    }
    
    Initialize $\bU$, $\bV$, $\bet$, $\bmu$, $\phi$\;
    Sample a random partition $\{ I_1, \dots, I_{\scr} \}$ such that $\cup_{r = 1}^{\,\scr} I_r = \{ 1, \dots, n \}$\;
    Sample a random partition $\{ J_1, \dots, J_{\scs} \}$ such that $\cup_{s = 1}^{\,\scs} J_s = \{1, \dots, m \}$\;
    
    \While{convergence is not reached}{

        \For{$J \in \{ J_1, \dots, J_{\scs} \}$}{
            {1. Sample the minibatch set} \\
            {
            \qquad Sample $I \in \{ I_1, \dots, I_{\scr} \}$ and set $B \gets I \times J$; \hfill
            $O(1)$ \\
            }
    
            {2. Compute the subsampled likelihood derivatives} \\
            {
            \qquad $\displaystyle \bet_\mb^t \gets \bU_{\row :}^t \bV_{\col :}^{t \top}$;  \quad
            $\displaystyle \bmu_\mb^t \gets g^{-1}(\bet_\mb^t)$; \hfill
            $O(\nrow \ncol d)$ \\

            \qquad $\displaystyle \dt{\bD}_\mb^t \gets \bW_\mb * (\bY_\mb - \bmu_\mb^t) /  \big\{ \phi^t \,\nu(\bmu_\mb^t) * \dt{g}(\bmu_\mb^t) \big\}$; \hfill
            $O(\nrow \ncol)$ \\
            
            \qquad $\displaystyle \ddt{\bD}_\mb^t \gets \bW_\mb / \big\{ \phi^t \,\nu(\bmu_\mb^t) * \dt{g}(\bmu_\mb^t)^2 \big\}$; \hfill
            $O(\nrow \ncol)$ \\
            }
    
            {3. Compute the smoothed gradients and update $\bV$} \\
            {
            \qquad $\displaystyle \hat{\bG}_{\scv,\col :}^t \gets (n / n_\row^*)   \dt{\bD}_\mb^{t\top} \bU_{\row :}^t - \lambda \bV_{\col :}^t$; \quad
            $\displaystyle \hat{\bH}_{\scv,\col :}^t \gets (n / n_\row^*) \ddt{\bD}_\mb^{t\top} (\bU_{\row :}^t * \bU_{\row :}^t) - \bLambda$;  \hfill
            $O(\nrow \ncol d)$ \\
    
            \qquad $\bar{\bG}_{\scv,\col :}^t \gets (1 - \alpha_1) \bar{\bG}_{\scv,\col    :}^{t-1} + \alpha_1 \hat{\bG}_{\scv,\col :}^t$; \quad
            $\bar{\bH}_{\scv,\col :}^t \gets (1 - \alpha_2) \bar{\bH}_{\scv,\col    :}^{t-1} + \alpha_2 \hat{\bH}_{\scv,\col :}^t$; \hfill
            $O(\ncol d)$ \\
    
            \qquad $\displaystyle \bDelta_{\scv, \col :}^t \gets - \alpha_t \, (\bar{\bG}_{\scv,\col :}^t / \bar{\bH}_{\scv,\col :}^t)$; \quad
            $\displaystyle \bV_{\col :}^{t+1} \gets \bV_{\col :}^t + \rho_t     \bDelta_{\scv, \col :}^t$; \hfill
            $O(\ncol d)$ \\
            }
            
            {4. Compute the smoothed gradients and update $\bU$} \\
            {
            \qquad $\displaystyle \hat{\bG}_{\scu,\row :}^t \gets (m / m_\col^*)   \dt{\bD}_\mb^t \,\bV_{\col :}^t - \lambda \bU_{\row :}^t$; \quad
            $\displaystyle \hat{\bH}_{\scu,\row :}^t \gets (m / m_\col^*)   \ddt{\bD}_\mb^t \,(\bV_{\col :}^t * \bV_{\col :}^t) - \bLambda$; \hfill
            $O(\nrow \ncol d)$ \\
    
            \qquad $\bar{\bG}_{\scu,\row :}^t \gets (1 - \alpha_1) \bar{\bG}_{\scu,\row    :}^{t-1} + \alpha_1 \hat{\bG}_{\scu,\row :}^t$; \quad
            $\bar{\bH}_{\scu,\row :}^t \gets (1 - \alpha_2) \bar{\bH}_{\scu,\row    :}^{t-1} + \alpha_2 \hat{\bH}_{\scu,\row :}^t$; \hfill
            $O(\nrow d)$ \\ 
    
            \qquad $\displaystyle \bDelta_{\scu, \row :}^t \gets - \alpha_t \, (\bar{\bG}_{\scu,\row :}^t / \bar{\bH}_{\scu,\row :}^t)$; \quad
            $\displaystyle \bU_{\row :}^{t+1} \gets \bU_{\row :}^t + \rho_t     \,\bDelta_{\scu, \row :}^t$; \hfill
            $O(\nrow d)$ \\
            }
        }
    }

    5. Orthogonalize $\hat\bU$ and $\hat\bV$;
\end{algorithm}

If the dispersion parameter $\phi$ is unknown and has to be estimated from the data, we can also adopt a smoothed stochastic estimator obtained through exponential averaging. More details are provided in Appendix~\ref{app:algorithmic_details} in the Supplementary Material.

Whenever the input data matrix is only partially complete, it is necessary to properly handle the missing values during the estimation process.
To this end, we rely on the general framework proposed by \cite{Cai2010} and \cite{Mazumder2010}, and later used by \cite{Kidzinski2022}.
This prescribes imputing the missing data entries during the optimization by updating them at each iteration using the most recent prediction of those values.
Algorithm \ref{alg:block_sgd_algorithm} can thus be adapted by replacing the static (incomplete) minibatch matrix $\bY_\mb$ with its completed version $\bY_\mb^t$ obtained at iteration $t$ after imputation. 
At the beginning of each iteration, we then introduce the imputation step as $y_{ij}^{t} \gets \mu_{ij}^{t}$ if $(i,j) \in B \cap \Omega^c$, where $\Omega^c$ is the complement of $\Omega$.

We refer to \eqref{eq:blockwise_stochastic_quasi_newton_update} as \emph{local} or \emph{partial} updates since they just modify the rows of $\bU$ and $\bV$ corresponding to the minibatch block of indices $B = I \times J$.
On the contrary, we refer to updates \eqref{eq:quasi_newton_update} and \eqref{eq:stochastic_quasi_newton_update} as \emph{global} updating rules.
The choice between alternated least squares \citep{Kidzinski2022, Wang2023}, exact quasi-Newton \citep{Kidzinski2022} and the proposed adaptive stochastic gradient descent (Algorithm \ref{alg:block_sgd_algorithm}, Figure \ref{fig:minibatch_subsampling}) methods is up to the researcher and, in principle, depends on the dimension of the problem under study and on the available computational resources.
In general, algorithms using a higher amount of information are more stable and accurate; however, they tend to scale poorly in high-dimensional settings and often get stuck on suboptimal optima.
On the other hand, cheaper stochastic methods using less information scale well in big-data problems at the cost of a lower level of precision.

\section{Additional computational aspects}
\label{sec:additional_computational_aspects}

\subsection{Parameter initialization}%
\label{subsec:parameter_initialization}

As is common in non-convex optimization, the performance of Algorithm~\ref{alg:block_sgd_algorithm} heavily depends on its initialization.
Random starting values can slow convergence and increase the risk of getting stuck in poor local minima or unstable saddle points. To improve both performance and accuracy, we adopt a structured initialization strategy that leverages the conditional GLM formulation of model~\eqref{eq:linear_predictor_model}, and is inspired by the initialization approaches employed in \cite{Risso2018}, \cite{Townes2019}, and \cite{Kidzinski2022}.

We first estimate the column-specific regression parameters, $\bbeta_{j:}$'s, by fitting $m$ separate GLMs. 
Then, conditionally on the estimated offsets $\bx_{i:}^\top \tilde\bbeta_{j:}$, we estimate the row-specific parameters, $\bgamma_{i:}$'s, using the same strategy, and fitting $n$ separate GLMs. 
The resulting estimates, $\tilde\bbeta_{j:}$ and $\tilde\bgamma_{i:}$, initialize the regression effects.
To initialize the latent scores, $\bu_{i:}$'s, we adopt the null-residual method of \cite{Townes2019}, extracting the first $d$ left eigenvectors of a residual matrix based on either deviance or Pearson residuals: $r_{ij}^\textsc{d} = \sign(y_{ij} - \mu_{ij}) \sqrt{D(y_{ij}, \mu_{ij})}$ and $r_{ij}^\textsc{p} = (y_{ij} - \mu_{ij}) / \sqrt{\nu(\mu_{ij})}$, where the mean $\mu_{ij}$ is approximated as $\tilde\mu_{ij} = g^{-1}(\bx_{i:}^\top \tilde\bbeta_{j:} + \tilde\bgamma_{i:}^\top \bz_{j:})$.
Finally, we estimate the loadings, $\bv_{j:}$'s, by fitting $m$ column-specific GLMs with the latent scores as fixed design matrix and an offset term $\bx_{i:}^\top \tilde\bbeta_{j:} + \tilde\bgamma_{i:}^\top \bz_{j:}$ in the predictor.
Using standard solvers for GLMs and singular value decomposition, the overall computational cost is
\begin{align*}
    m \,\underbrace{O(p^3 + p^2 n)}_{\text{$\tB$--inner cycle}} + \;
    n \,\underbrace{O(q^3 + q^2 m)}_{\text{$\Gamma$--inner cycle}} + \;
    \underbrace{O((p+q+d) nm)}_{\text{$\tU$--residual SVD}} + \;
    m \,\underbrace{O(d^3 + d^2n)}_{\text{$\tV$--inner cycle}}.
\end{align*}
All GLM-fitting steps are highly parallelizable, with per-task complexity at most $O(k^3 + k^2 n)$, where $k = \max(p, q, d)$.
In high-dimensional settings, one may approximate the GLM fits via ordinary least squares on a link-transformed response, reducing complexity to:
\begin{align*}
    \underbrace{O(p^3 + p^2 n + pnm)}_{\text{$\tB$--least squares}} + \;
    \underbrace{O(q^3 + q^2 m + qnm)}_{\text{$\Gamma$--least squares}} + \;
    \underbrace{O((p+q+d) nm)}_{\text{$\tU,\tV$--residual SVD}}.
\end{align*}

Although inspired by the null-residual strategy of \cite{Townes2019}, our initialization differs in four key aspects:
(i) it serves solely for initialization, not final estimation;
(ii) it generalizes to any exponential family, not just count data;
(iii) it accounts for covariate-dependent sampling;
(iv) it produces loadings satisfying approximate estimating equations, unlike the original purely spectral null-residual approach.

\subsection{Model selection}%
\label{subsec:model_selection}

In the GMF formulation detailed in Section \ref{sec:model_specification}, the model complexity is mainly controlled by the rank of the matrix factorization, $d$.
The optimal selection of the factorization rank needs a careful balance between model complexity and goodness of fit to avoid both under- and over-fitting issues.
In the matrix factorization and factor model literature, a popular class of rank selection methods is the so-called \emph{spectral thresholding}, or \emph{elbow}, approach.
This consists of analyzing the singular values, in descending order, of a sufficiently high-rank matrix decomposition to detect a significant decrease in the rate of change of the singular values, which suggests the optimal number of factors to retain.  
Thanks to its simplicity, computational efficiency, and the significant amount of empirical and theoretical support, spectral thresholding methods gained wide popularity for rank selection problems.
See, e.g., the work of \cite{Onatski2010}, \cite{Fan2022}, \cite{Wang2023}, \cite{Liu2023}, \cite{Nie2024}, and \cite{Liu2024}.
This class of selection criteria favors a compact low-rank representation of the signal and is often used when it is of interest to identify the principal modes of variation in the data for interpretation purposes. 

Another popular rank determination method proposed in the literature leverages the concept of \emph{out-of-sample} error minimization \citep{Mazumder2010, Kidzinski2022}.
An estimate of the out-of-sample error for matrix factorization problems can be obtained either by using information-based metrics, such as the Akaike (AIC) or the Bayesian (BIC) information criteria, or prediction error measures.
The latter may be computed by removing some entries of the data matrix during the estimation process substituting them with missing values and then evaluating the reconstruction error on the hold-out set of data.
The same strategy can be performed repeatedly within a cross-validation procedure to obtain a more reliable estimate of the generalization error.
The minimizer of such an error measure over a fairly large grid of prespecified matrix ranks is the selected latent dimension.

Rank selection based on error minimization is intuitive, general, and robust with respect to the estimation method.
However, its application to high-dimensional data is hindered by its extremely high computational cost, due to the need for multiple estimations of possibly over-parametrized models.
Moreover, many state-of-the-art methods proposed in the literature and implemented in standard software packages do not handle missing values directly, thus preventing robust evaluation of out-of-sample error prior to missing value imputation.

In the high-dimensional setting, spectral thresholding methods, akin to those explored by \cite{Wang2023}, \cite{Nie2024}, and \cite{Liu2024}, gained increasing popularity, thanks to their ease of implementation, computational efficiency and connection with standard scree-plot analysis of PCA.Yet, these are not often used in practice in the omics literature, where practitioners often rely on software default values, such as 10 or 50.

Thanks to its ability of handling missing values, the intrinsic scalability of the proposed stochastic optimization method, and a convenient warm-start initialization strategy \citep{Friedman2007, Friedman2010}, our approach enables for the first time in the omics literature the systematic selection of the number of latent factors using well grounded criteria based on cross-validation out-of-sample error and spectral thresholding.

\section{Simulation studies}%
\label{sec:simulation_studies}

In this section, we assess the relative performances of the proposed estimation algorithms compared to state-of-the-art methods through several simulation experiments.
In particular, we are interested in evaluating the considered approaches in terms of execution time, memory consumption, out-of-sample prediction error, and biological signal extraction quality.

In real data scenarios, the functional form of the data-generating mechanism is typically unknown to the researcher, and the assumption of correct model specification is rarely met.
To mimic this realistic situation, in our experiments, we used different models for data simulation and signal extraction.
In this way, all the estimation methods we consider are misspecified by construction and, a priori, none of them has an advantage over the others.

\subsection{Data generating process}
\label{subsec:simulations:data_generaing_process}

To simulate the data, we use the \R~package \splatter~\citep{Zappia2017}, which is freely available on \bioconductor~\citep{Huber2015}.
The package \splatter~allows us to simulate gene-expression matrices incorporating several user-specified features, such as the dimension of the matrix, the number of cell-types, the proportion of each cell-type in the sample, the form of the cell-type clusters, the expression level, the number of batches, the strength of the batch effects, and many others. 

In our experiments, we considered the following simulation setting: each dataset contains cells from five well-separated types evenly distributed in the sample. This is the signal that we aim to reconstruct with the latent factors. 
Moreover, the data are divided into three batches having different expression levels. This effect can be modeled as row covariates in our approach.
No lineage or branching effects are considered. Moreover, the simulation includes cell-specific \emph{library sizes}, which mimic the differences in the total number of counts per cell observed in real data. We model such effects with a column intercept in our framework.

To evaluate the performance of the proposed method under different regimes, we consider two simulation settings.
In simulation setting A, we compare several matrix factorization models and algorithms under a fixed latent space rank, $d = 5$, and we let the dimensions of the response matrix increase.
Specifically, we set the number of cells, $n$, to be 10 times the number of genes, $m$, and we set $m \in \{ 100, 250, 500, 750, 1000 \}$.
In simulation setting B, we compare the same set of factorization methods by fixing the dimensions of the response matrix to $n = 5000$ and $m = 500$, and letting the latent space rank grow, i.e., $d \in \{ 5, 10, 15, 20, 25 \}$.
For each combination of latent rank, $d$, number of cells, $n$, and number of genes, $m$, under the two scenarios, we generated 100 expression matrices.
Additional details are provided in Appendix C in the Supplementary Material.
Moreover, the code to generate the data and run the simulations is publicly available on GitHub\footnote{Refer to the GitHub repository \href{https://github.com/alexandresegers/sgdGMF_Paper}{\texttt{github/alexandresegers/sgdGMF\_Paper}}}.

\subsection{Competing methods and performance measures}%
\label{subsec:simulations:methods_and_measures}

For the estimation, we consider several matrix factorization methods based on different model specifications coming from the statistical, machine learning, or bioinformatics literature.
In the following, we list all the methods we consider for signal extraction of the latent variables:
\begin{description}
    \setlength\itemsep{-0.025cm}
    \item {\cmf}: collective matrix factorization with non-negativity constraints and batch indicator as side information matrix (\cmfrec~package; \citealp{Cortes2023});
    \item {\nmf}: non-negative matrix factorization based on the Poisson deviance without side information matrix and automatic missing value estimation mechanism (\nmfpak~package; \citealp{Gaujoux2010});
    \item {\nmfp}: non-negative matrix factorization based on the Poisson deviance without side information matrix, where the missing values are automatically estimated together with the latent variables (\nnlm~package; \citealp{Lin2020});
    \item {\avagrad}: Poisson GMF model estimated using the AvaGrad algorithm (\glmPCA~package; \citealp{Townes2019});
    \item {\fisher}: Poisson GMF model estimated via alternated diagonal Fisher scoring (\glmPCA~package; \citealp{Townes2019});
    \item {\nbwave}: Negative Binomial GMF estimated via alternated Fisher scoring (\NewWave~pakage; \citealp{Agostinis2022});

    \item {\gfmam}: Poisson generalized factor model estimated via alternated maximization (\GFM~package; \citealp{Liu2023});

    \item {\gfmvem}: Poisson generalized factor model estimated via variational expectation maximization (\GFM~package; \citealp{Nie2024});
    
    \item {\coap}: Covariate-augmented overdispersed Poisson factor model via variational expectation maximization (\COAP~package; \citealp{Liu2024});
    \item {\airwls}: Poisson GMF model estimated via the alternated iterative re-weighted least squares algorithm of \cite{Kidzinski2022} and \cite{Wang2023} (\sgdGMF~package);
    \item {\newton}: Poisson GMF model estimated via the exact quasi-Newton algorithm of \cite{Kidzinski2022} (\sgdGMF~package);
    \item {\asgd}: Poisson GMF model estimated via the proposed adaptive stochastic gradient descent with block-wise subsampling described in Algorithm \ref{alg:block_sgd_algorithm} (\sgdGMF~package; this work)
\end{description}

Some of the most relevant characterizing features of all these methods and the relative implementations are summarized in Table \ref{tab:package_survey}.
To emulate a conventional usage of all these packages, we tried to adhere closely to the standard option setups recommended in the documentation provided by their respective authors. 
All the algorithms implemented in the \sgdGMF~package are initialized with the strategy described in Section~\ref{subsec:parameter_initialization}.
Additionally, for all methods allowing parallel computing (\cmf, \nmfp, \nbwave, \airwls~and \newton), we run the estimation employing 4 cores, a configuration commonly supported by modern PCs. For more details, we refer to Appendix~\ref{app:simulation_setting_details} in the Supplementary Material.

To account for technical confounders, we use the batch group as a covariate in all the models supporting regression effects, namely \cmf, \avagrad, \fisher, \nbwave, \coap, \airwls, \newton, and \asgd. 
Additionally, we also included row- and column-specific intercepts to account for cell- and gene-specific effects.
Thus, in our model formulation, we obtain the linear predictor $\eta_{ij} = \gamma_{0i} + \bx_{i:}^\top \bbeta_{:j} + \bu_{i:}^\top \bv_{:j}$, where $\bx_{i:}$ is a vector of dummy variables identifying which batch each cell $i$ belongs to.

To assess the relative performance of the models under consideration, we estimate them on a designated \emph{training} set and subsequently evaluate their goodness-of-fit using a \emph{validation} set.
In each simulation scenario, we initially generate a complete data matrix and then construct the training set by introducing a predetermined percentage of missing values, typically set at $30\%$. 
The positions of such holdout entries are sampled from a uniform distribution on the matrix indices.
The corresponding test set comprises all observations withheld during the training phase.
We evaluate the models in terms of elapsed execution time (in seconds), peak random access memory consumption (in megabytes), out-of-sample reconstruction, and cell-type separation in the estimated latent space.
Let $\cT$ and $\cV$ be the index sets corresponding to the training and validation entries of the response matrix, and let $\bar{y}_{_\cT}$ be the empirical average of the training matrix.
The out-of-sample reconstruction error is then computed using the relative logarithmic root mean squared error, and the relative Poisson deviance, which are defined as
\begin{equation*}
    \text{Error}(\bY, \hat\bmu) = \frac{\sum_{(i,j) \in \cV} \big[\log\{(1+y_{ij})/(1+\hat\mu_{ij})\} \big]^2}{\sum_{(i,j) \in \cV} \big[\log\{(1+y_{ij})/(1+\bar{y}_{_\cT})\} \big]^2}, \quad
    \text{Deviance} (\bY, \hat\bmu) = \frac{\sum_{(i,j) \in \cV} D(y_{ij}, \hat\mu_{ij})}{\sum_{(i,j) \in \cV} D(y_{ij}, \bar{y}_{_\cT})}.
\end{equation*}

To assess the degree of cell-type cluster separation over the estimated latent space, we consider two validation scores: the average silhouette width \citep{Rousseeuw1987} computed on a two-dimensional tSNE embedding \citep{VanderMaaten2008}, and the neighborhood purity \citep{Manning2008} evaluated on the original latent space,
which can be evaluated using the functions \texttt{silhouette()} and \texttt{neighborPurity()} from the \R~packages \texttt{cluster} \citep{Maechler2022} and \texttt{bluster} \citep{Lun2023b}, respectively.
The average silhouette is a global measure of cluster cohesion ranging from $-1$ to $1$, with $1$ indicating perfect separation between clusters.
The neighborhood purity is a local measure of cluster cohesion ranging from $0$ to $1$, with $1$ indicating perfect coherence between clusters.
Being a global measure based on Euclidean distances, the silhouette favors clusters with spherical shapes, while it is not able to detect well-separated clusters featuring non-spherical or non-convex boundaries.
On the other hand, neighborhood purity is a local measure of cluster separation, which is able to detect localized behavior and does not depend on the cluster shapes.

\subsection{Simulation results}%
\label{subsec:simulations:simulation_results}

\begin{figure}
    \centering
    \includegraphics[width = 
    \textwidth]{./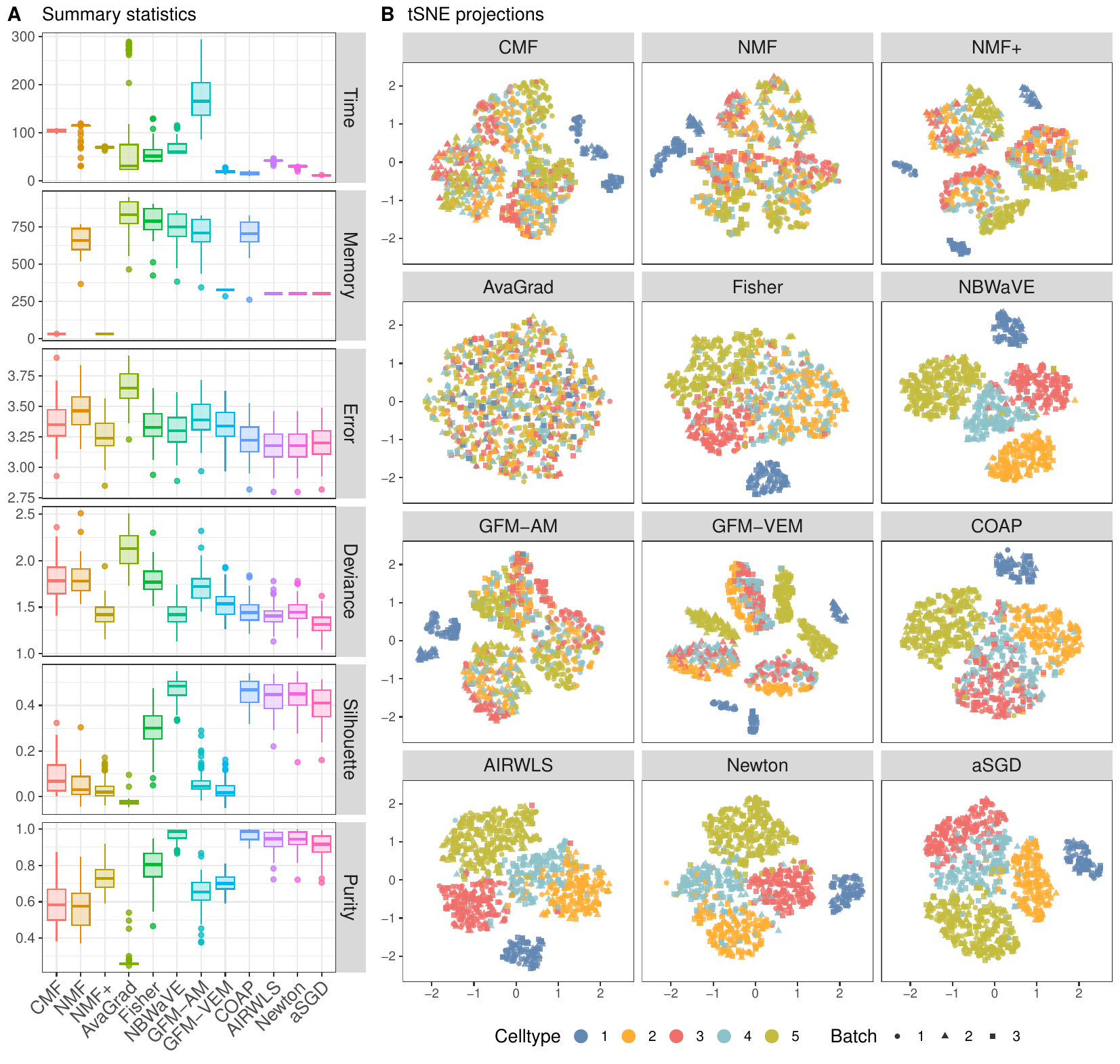}
    \caption{Summary information for the simulation experiment described in Section \ref{subsec:simulations:data_generaing_process} with $n = 5000$, $m = 500$ and $d = 5$. 
    Left (panel A): summary statistics reporting the execution time (in seconds), the peak memory consumption (in megabytes), the out-of-sample relative logarithmic root mean squared error (multiplied by $100\%$), the out-of-sample relative Poisson deviance (multiplied by $100\%$), the silhouette score of the true cell-type clusters calculated on a 2-dimensional tSNE projection, and the mean cluster purity of the true cell-type clusters calculated on the 5-dimensional estimated latent space.
    Right (panel B): 2-dimensional tSNE projections of the estimated latent factors for one specific replication of the experiment.
    }
    \label{fig:static_simulation_summary}
\end{figure}

Figure~\ref{fig:static_simulation_summary} presents an overview of the results obtained from the data simulated with $n=5000$, $m=500$, and $d=5$. 
Our proposed \asgd~method is the fastest approach together with \gfmvem~and \coap~(Fig.~\ref{fig:static_simulation_summary}A): it is parsimonious in terms of memory usage, while performing similarly to the best-performing methods in terms of out-of-sample error, deviance, silhouette width, and neighborhood purity. 
An exemplar simulation run (Fig.~\ref{fig:static_simulation_summary}B) shows that all the GMF methods using covariate information succeed in reconstructing the original cell-type clustering, being able to filter out the batch effect via the regression term in the linear predictor.
The only exception is \avagrad, which fails to separate the different groups. 
On the other hand, \gfmam~and \gfmvem~are unable to disentangle the cell-types from the batch effects, primarily due to the \GFM~package's inability to account for covariate effects.
Similarly, the NMF methods over-cluster the data, not allowing for batch effect removal via regression.
This is confirmed by the average silhouette width and the mean neighborhood purity (bottom lines of Fig.~\ref{fig:static_simulation_summary}A), which show a large difference in performance, discriminating methods based on their ability to account for covariate (i.e., batch) effects.

These results are confirmed across both setting A and B (Fig. \ref{fig:dynamic_simulation_time} and \ref{fig:dynamic_simulation_summary}): in terms of computational efficiency, the proposed \asgd~implementation consistently outperforms its competitors in setting A and ranks as the top performer in setting B, showing lower elapsed execution times, and a better scalability with respect to both the sample size and the dimension of the latent space (top row of Fig.~\ref{fig:dynamic_simulation_time}). 
Additionally, \asgd~manifests a parsimonious management of the random access memory, which is aligned to \airwls, \newton, and \gfmvem~and surpassed only by \cmf~and \nmfp~(bottom row of Fig.~\ref{fig:dynamic_simulation_time}).
Interestingly, \cmf, \nmfp, \gfmvem, \airwls, \newton, and \asgd~maintain an almost constant memory footprint in simulation setting B. 
This behavior is attributed to their efficient memory allocation strategies, where memory usage is mainly dictated by the storage of input data and sufficient statistics, with negligible additional cost during optimization, even as the latent dimension scales but is much smaller than the matrix dimensions.
Among the alternative methods, \coap~and \newton~emerge as the fastest, followed by \airwls.
On the other hand, \gfmam, \nbwave, \fisher~and \avagrad~optimizers display inferior scalability compared to \asgd, \newton, and \airwls~(Fig. \ref{fig:dynamic_simulation_time}).
Contrary to the claims in the \glmPCA~package documentation, our findings indicate that the \avagrad~optimizer often lacks in both speed and reliability when compared to the \fisher~optimizer, which typically achieves convergence within a reasonable time and, on average, runs faster than \avagrad.

Regarding non-negative matrix factorization, implementations such as \nmf~and \cmf~ demonstrate poor computational scalability across both settings. 
However, \nmfp~shows improved performance, reaching efficiency levels close to \airwls~in setting B, and an optimal management of the memory in both settings (Fig. \ref{fig:dynamic_simulation_time}).

In terms of goodness-of-fit measures (top two rows of Fig.~\ref{fig:dynamic_simulation_summary}), the least out-of-sample logarithmic error and deviance are systematically achieved by \nmfp, \coap, \airwls, \newton, and \asgd, with \asgd~consistently emerging as the most accurate method in the deviance metric.
Such a situation partially reflects on the silhouette width and neighborhood purity scores (bottom two rows of Fig. \ref{fig:dynamic_simulation_summary}), for which, in both scenarios A and B, \nbwave, \coap, \airwls, \newton, and \asgd~always outperform the other methods in terms of cell-type separation in the latent space.
As such, \asgd~exhibits slightly suboptimal performance compared to the other top-performing methods, especially in small sample settings; however, it converges towards them as the sample size increases.
This behavior is not unexpected, as stochastic optimization algorithms typically require a large number of observations to mitigate their intrinsic randomness, achieving stability as the data dimension grows.

\begin{figure}
    \centering
    \includegraphics[width = \textwidth]{./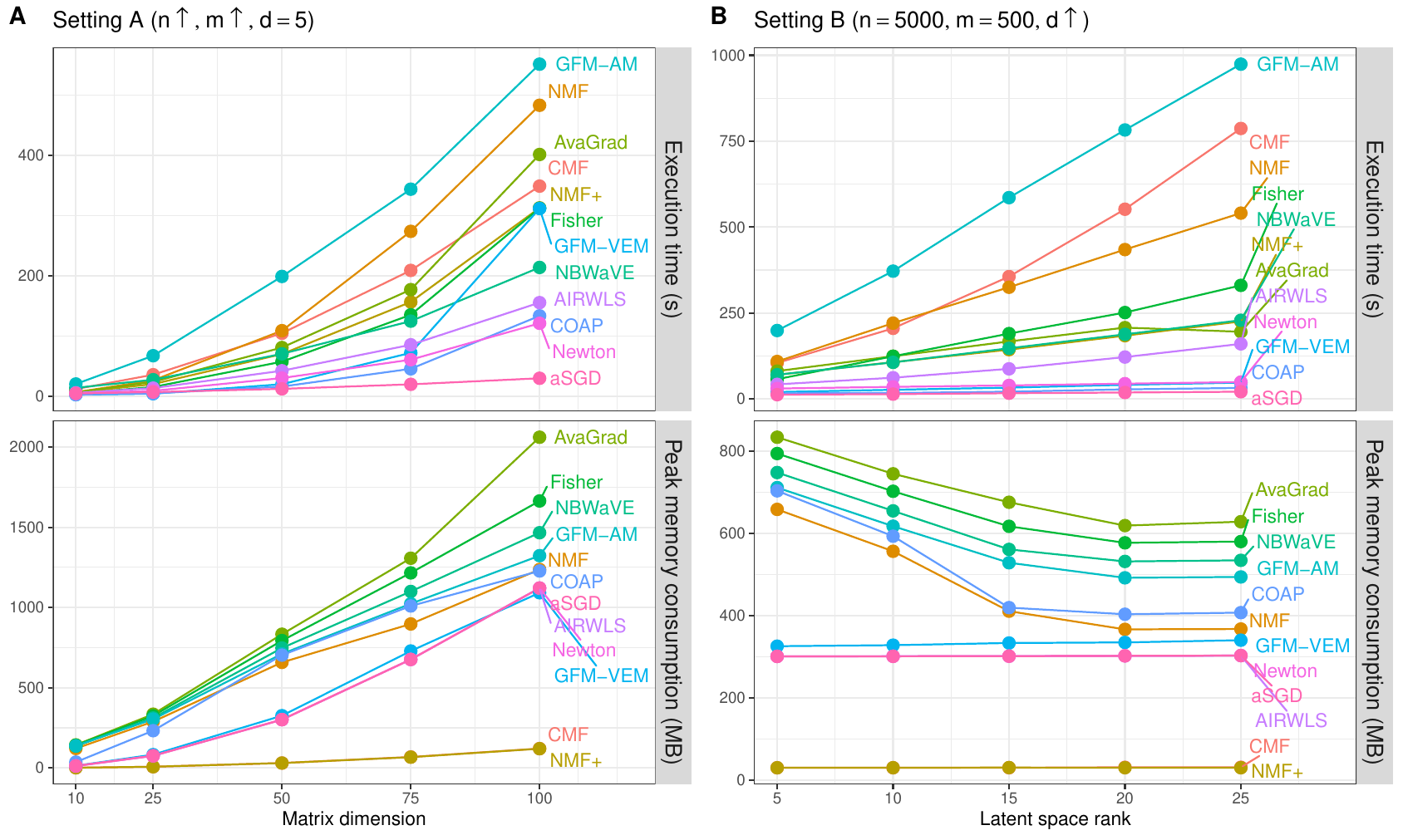}
    \caption{%
    Summary statistics of the simulation experiments described in Section \ref{subsec:simulations:data_generaing_process}.
    The columns correspond to simulation settings A (left) and B (right). 
    The rows correspond to the elapsed execution time in seconds (top) and the peak memory consumption in megabytes (bottom). 
    }
    \label{fig:dynamic_simulation_time}
\end{figure}

\begin{figure}
    \centering
    \includegraphics[width = \textwidth]{./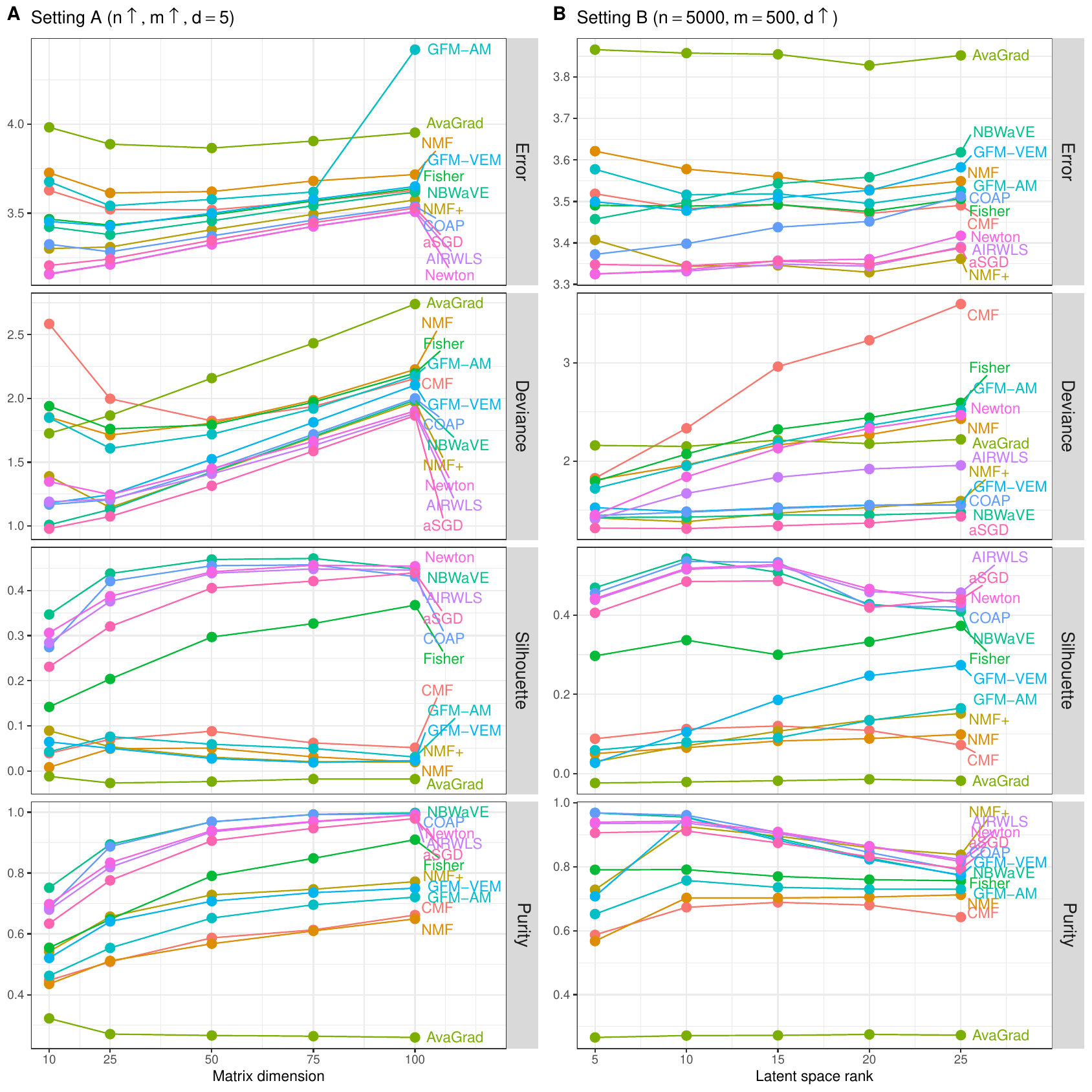}
    \caption{%
    Summary statistics of the simulation experiments described in Section \ref{subsec:simulations:data_generaing_process}.
    The columns correspond to simulation settings A (left) and B (right).
    The rows correspond to four goodness-of-fit measures. 
    From top to bottom: the out-of-sample relative logarithmic root mean squared error, the out-of-sample relative residual deviance, the silhouette evaluated on a 2-dimensional tSNE projection of the latent space, and the neighborhood purity of the true cell-type evaluated on the original latent space.
    }
    \label{fig:dynamic_simulation_summary}
\end{figure}

\section{Real data applications}%
\label{sec:real_data_applications}

In this section, we will demonstrate the effectiveness of our method using two real datasets. The first dataset, referred to as the Arigoni dataset, is a 10X Genomics scRNA-seq experiment on lung cancer cell lines with unique driver mutations \citep{Arigoni2024}. As suggested by the authors, the heterogeneity among cell lines can be used as ground truth to benchmark computational methods. We use these data to illustrate that our method can discover real biological signal. 

Further, we apply our method to a large scRNA-seq dataset consisting of more than 1.3 million cells from the mouse brain, generated by 10X Genomics \citep{Lun2023}. As this dataset does not contain a ground truth of cell labels, this dataset primarily showcases the scalability of our approach in large datasets.

Throughout this section, we consider parametrization (B1) to obtain an orthonormal loading matrix $\bV$ and a scaled orthogonal score factor matrix $\bU$ (see Section \ref{subsec:parameter_identifiability}). This choice is conventional in the RNA-seq literature \citep[see, e.g.,][]{Risso2018, Townes2019, Ahlmann2023} and is coherent with the standard parametrization of principal component analysis.

\subsection{Arigoni data}
\label{subsec:applications:arigoni_data}

The original Arigoni dataset \citep{Arigoni2024} consists of $29{,}606$ cells from 8 different lung cancer cell lines with unique driver mutations (EGFR, ALK, MET, ERBB2, KRAS, BRAF, ROS1). The ground truth knowledge of the driver mutations can be used to evaluate cell clustering by visual inspection and upon using clustering algorithms. 
Importantly, the CCL-185-IG cell line is derived from the A549 cell line. 
Therefore, only subtle differences are expected between these two cell lines. 
Quality control is done using the \texttt{perCellQCFilters} function of the \R~package \texttt{scuttle} \citep{McCarthy2017}, which filtered cells that have a library size lower than 1306, a percentage of mitochondrial reads higher than 6.05\% or cells that have fewer than 732 features expressed. 
Additionally, peripheral blood mononuclear cells are removed due to their distinct expression profile. The final filtered dataset includes $26{,}472$ cells from 7 different cell lines. 
Unless mentioned differently, all the analyses are based on the 500 most variable genes, a common choice in standard scRNA-seq workflows. The selection of the number of highly-variable genes did not prove critical, as the results are robust across a large spectrum of values (Supplementary Fig. \ref{fig:hvg} and \ref{fig:hvg_purity}).

To select the optimal matrix rank for the latent component of the model, different model selection criteria are assessed in a 5-fold cross validation (see Section \ref{subsec:model_selection} for details). 
AIC and BIC are assessed on the 5 training data partitions, while the out-of-sample deviance is calculated on the test data. 
Also, considering all the data, we assess the scree plot of the eigenvalues based on the deviance residuals after using OLS on the log-transformed data.

Both the AIC and out-of-sample deviance criteria suggest a matrix rank of 15, while the BIC and the scree plot suggest 9 (Fig. \ref{fig:model_selection_arigoni}A). 
The mean cell-line neighborhood purity scores (Fig. \ref{fig:model_selection_arigoni}B) show that, for the majority of cell lines, a rank of 9 is sufficient to completely separate the groups, with only cell lines A549 and CCL-185-IG exhibiting a score lower than $0.9$. 
At rank 15, all cell lines achieve a large value of mean purity, and increasing the rank further does not improve this index, while introducing the risk of overfitting the data, as observed in the increasing out-of-sample deviance in the cross-validation (Fig. \ref{fig:model_selection_arigoni}A). 
These remarks are confirmed when visually inspecting the tSNE plots, coloured by the ground truth labels (Fig. \ref{fig:model_selection_arigoni}C), that show that a matrix rank of 9 yields a good separation, except for A549 from CCL-185-IG, which show a slight degree of mixing (Fig. \ref{fig:model_selection_arigoni}C, confusion matrix). 
Although there are no outstanding visual differences between rank 15 and 30, performing Leiden clustering with a resolution tuned to obtain 7 clusters shows that a matrix of rank 30 results in one small cluster with very few cells, rather than separating the A549 and CCL-185-IG groups, while the latter happens as expected for rank 15. 
This suggests that 15 is a reasonable number of latent factors to include in the model.
Importantly, the tested model selection approaches were not informed by cell type labels, which were used only to evaluate the methods' performance. This shows that unsupervised approaches to model selection are able to estimate a number of factors sufficient to extract meaningful biological signal from real data.

\begin{figure}
    \centering
    \includegraphics[width = 
    \textwidth]{./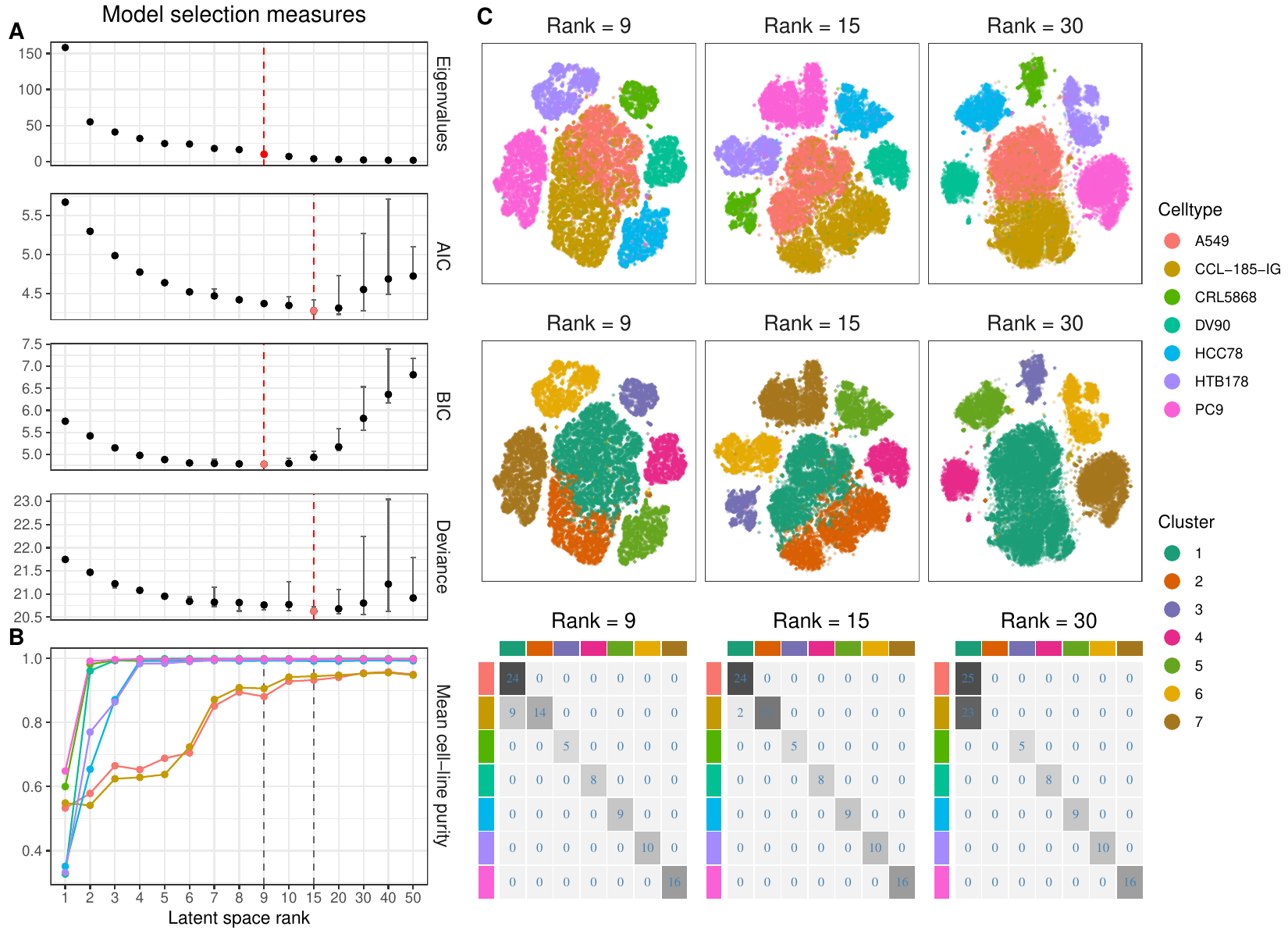}
    \caption{%
    	\label{fig:model_selection_arigoni}%
    	Assessment of model selection metrics. 
    	A) Application of diverse model selection criteria including scree plot, AIC, BIC, and cross-validation based on out-of-sample deviances. 
    	B) Mean cell-line purity as a function of the matrix rank. 
    	C) tSNE plots coloured by the ground truth and clusters obtained by Leiden clustering, alongside a confusion matrix representing cell-line distributions across clusters. 
    	In the confusion matrices, each entry is featured with a number corresponding to the percentage of cells belonging to that configuration. 
    	The matrix total is equal to 100 and the colour intensity is proportional to the percentages. 
    }
\end{figure}

Using 15 factors, as suggested by the model selection criteria, \asgd~was compared with \nbwave, \fisher, and \avagrad, three of the most popular methods in single-cell analysis, as well as with \coap, one of the fastest and most reliable method in the simulations in addition to \asgd~(Supplementary Fig. \ref{fig:Arigoni-newwave-comparison}). 
We tested \asgd~with both a Negative Binomial and a Poisson likelihood, and the results were virtually identical (Supplementary Fig. \ref{fig:Arigoni-newwave-comparison}). 
All methods achieve similar results, as shown by both visual inspection of the tSNE plots and the mean cell-line neighborhood purities. 
However, \asgd~is orders of magnitude faster, and therefore allows for model selection obtaining optimal matrix rank, which was shown to be important for clustering the data. 
Further, \asgd~has a better out-of-sample deviance. 
In terms of memory usage, \asgd~has a similar peak RAM memory usage as \coap, outperforming the three other methods.

\subsection{TENxBrainData}
\label{subsec:applications:tenxbraine_data}

To demonstrate the scalability of our method to large datasets, we apply \asgd~to the TENxBrainData \citep{Lun2023}, which consists of scRNA-seq UMI counts generated by 10X Genomics for approximately 1.3 million cells. 
Cells of this dataset were obtained from the cortex, hippocampus and ventricular zone of two mouse brains. 
Although no ground truth is available for the different cell types in this dataset, marker genes that can discriminate between subtypes of cells are available for mouse brains, and the list used by \cite{Hicks2021} is used to qualitatively evaluate extraction of biological signal in different cell clusters. 
Quality control and filtering are performed using the \R~package \texttt{scater} \citep{McCarthy2017}, excluding cells with an exceptional number of mitochondrial reads (more than 3 median absolute deviances away from the median), and genes with no expression in over 99\% of the cells. This procedure yields $1{,}232{,}055$ cells and we retain the 500 most variable genes, as done with the Arigoni dataset.

Considering the eigenvalue gap method (Supplementary Fig. \ref{fig:casestudy-screeplot}), which is a very fast procedure for model selection, we selected a model with 10 latent factors. 
To study the scalability of \asgd~and its competitors on this large dataset, we considered subsamples of $100{,}000$, $200{,}000$ and $300{,}000$ cells (Fig. \ref{fig:casestudy}A). This analysis showed that \asgd~is orders of magnitude faster than competing methods, with \fisher~and \nbwave~taking $4$ and $8$ hours, respectively, to analyze $300{,}000$ cells. 
While \avagrad~achieves better computational speed than \fisher~and \nbwave, it remains extremely slow compared to \asgd. Moreover, \avagrad~did not converge one in five times for $100{,}000$ cells, and two in five times for $200{,}000$ cells. 
Therefore, these methods would result in extreme computational times on the full dataset. Moreover, the memory usage of \asgd~remains lower than competing methods. 
Note also that \coap~returned errors when using more than $70{,}000$ cells, rendering it unable to handle large datasets. Therefore, only \asgd~is a reasonable model to run on the full dataset due to its superior computational efficiency.

Using \asgd~on the full dataset returned results in 77 minutes. 
Subsequent Leiden clustering of this matrix factorization revealed sensible biological signal extraction, as its clusters align with established marker genes (Fig. \ref{fig:casestudy}B). 
For example, cluster 8 is characterized by cells expressing Pyramidal neuron cells marker genes, such as Crym \citep{Loo2019}, while cluster 4 contains cells expressing interneuron markers, e.g., Sst and Lhx6  \citep{Tasic2018}. 

\begin{figure}
    \centering
    \includegraphics[width = 
    \textwidth]{./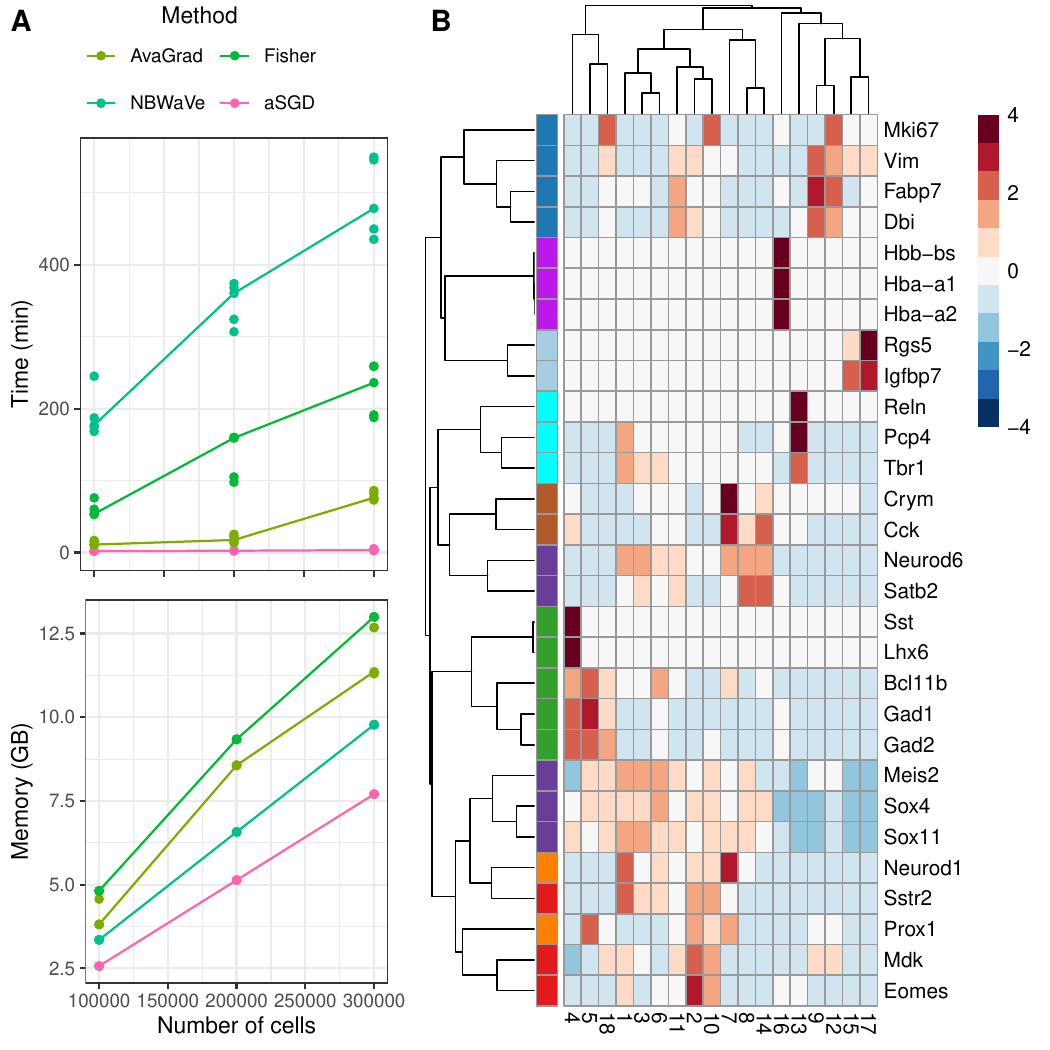}
    \caption{%
    	\label{fig:casestudy}%
    	Results on large-scale data. 
    	A) Computational time and memory usage for different methods on increasingly larger subsets of the dataset.
    	Each subset includes 500 high-variable genes and a growing number of cells. 
    	B) Heatmap of the average gene expression of the 18 clusters obtained by Leiden clustering computed on the latent score matrix, for 29 marker genes. Each marker gene is coloured based on the cell type it is expressed in.
    }
\end{figure}


\section{Discussion}%
\label{sec:discussion}

In the present work, we propose a flexible and scalable tool to perform generalized matrix factorization in massive data problems, with a particular focus on scRNA-seq applications. 
We propose an innovative adaptive stochastic gradient descent algorithm, whose performances are enhanced via a memory-efficient block-wise subsampling method and a convenient initialization strategy. An \R/\Cpp~implementation in the open-source package \texttt{sgdGMF} is freely available on CRAN. 
Overall, the proposed method proved competitive with state-of-the-art approaches available in \R, showing higher prediction accuracy, good biological signal-extraction ability, and a significant speed-up of the execution time in simulated and real data examples.
Unlike most methods currently employed for scRNA-seq signal extraction, our approach natively deals with missing values, iteratively imputing them with the model's current best prediction. 
This feature proved important for out-of-sample error evaluation, model selection, and matrix completion.

An appealing feature of the proposed method is its flexibility, which enables several extensions and generalizations. 
The proposed framework naturally extends to heterogeneous likelihood specification across rows or columns of the response matrix. 
This would make it possible to jointly factorize discrete, count and continuous data sharing the same latent factorization structure, but different conditional distributions. 
From a biological viewpoint, this extension would permit to flexibly model multi-omic data \citep{Argelaguet2018, Argelaguet2020} under the unified framework provided by \texttt{sgdGMF}.

Another interesting extension of the proposed algorithm is generalized tensor factorization for non-Gaussian data arrays. In this setting, the computational complexity of estimating highly parametrized models can easily grow very fast. Thus, cheap and modulable estimation algorithms, such as the proposed adaptive stochastic gradient descent, are becoming increasingly important.

From an algorithmic viewpoint, another fascinating possibility is considering non-uniform sampling schemes for the mini-batch selection. For instance, if the sample is divided into known subpopulations, it could be convenient to exploit the clustered nature of the data when forming the mini-batch partition via stratification. This can improve the representativeness of each chunk, reduce the variance of the gradient estimator, and prevent the optimization from converging to suboptimal maxima dominated by a specific subpopulation signal.

\section{Funding}

This work was supported by EU funding within the MUR PNRR ``National Center for HPC, big data and quantum computing'' (Project no. CN00000013 CN1). 
The views and opinions expressed are only those of the authors and do not necessarily reflect those of the European Union or the European Commission. 
Neither the European Union nor the European Commission can be held responsible for them. 
DR was also supported by the National Cancer Institute of the National Institutes of Health (U24CA289073) and by project EOSS6-0000000644 from the Chan Zuckerberg Initiative. 
This work was supported by grants from Ghent University Special Research Fund (BOF20/GOA/023) (A.S., L.C.), Research Foundation Flanders (FWO G062219N) (A.S., L.C.).

\section{Data availability}

\texttt{sgdGMF} is freely available as an open-source \R~package at \davide{\href{https://CRAN.R-project.org/package=sgdGMF}{\texttt{CRAN.R-project.org/package=sgdGMF}}}.
The scripts used to run all analyses are available on GitHub at \href{https://github.com/alexandresegers/sgdGMF_Paper}{\texttt{github/alexandresegers/sgdGMF\_Paper}}.
The Arigoni dataset is available at \href{https://doi.
org/10.6084/m9.figshare.23939481.v1}{\texttt{https://doi.org/10.6084/m9.figshare.23939481.v1}}. 
The TENxBrainData dataset is available as part of the \texttt{TENxBrainData} Bioconductor package at \href{https://bioconductor.org/packages/TENxBrainData}{\texttt{https://bioconductor.org/packages/TENxBrainData}}.

\bibliographystyle{apalike}
\bibliography{./biblio}

\clearpage

\begin{center}\Large\bf
    Appendices of ``Stochastic gradient descent estimation of \\ 
    generalized matrix factorization models with \\
    application to single-cell RNA sequencing data''
\end{center}

\begin{appendix}

\section{Parameter identifiability}
\label{app:parameter_identifiability}

Here, we discuss the set of identifiability constraints introduced in Section~\ref{subsec:parameter_identifiability}, proving the model identification under constraints (A) and (B), and providing a practical and efficient post-processing approach to enforce the identification of the parameter estimates.

\begin{proposition}
	\label{prop:model_identifiability}
	Under assumption (A) and any of the equivalent conditions (B1), (B2), or (B3), the GMF model in \eqref{eq:exponential_family_distribution}--\eqref{eq:linear_predictor_model} is computationally identifiable.
\end{proposition}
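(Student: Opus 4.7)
The plan is to prove Proposition~\ref{prop:model_identifiability} in two stages, mirroring the natural block structure of the linear predictor in \eqref{eq:linear_predictor_model}. In the first stage I will show that, under constraint~(A) alone, the linear predictor matrix $\bH = \{\eta_{ij}\}$ uniquely identifies the triple $(\bB, \bGamma, \bM)$ where $\bM := \bU\bV^\top$. In the second stage I will show that, once $\bM$ is fixed, any of the three equivalent conditions (B1), (B2) or (B3) uniquely determines the factor pair $(\bU, \bV)$. Since $g$ is a continuously differentiable bijection, $\bH$ is in one-to-one correspondence with the mean matrix $\{\mu_{ij}\}$, so identifiability of $\bpsi$ from the model distribution reduces to identifiability of the decomposition of $\bH$.

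For the first stage I will suppose two representations $\bH = \bX\bB^\top + \bGamma\bZ^\top + \bU\bV^\top = \bX\tilde\bB^\top + \tilde\bGamma\bZ^\top + \tilde\bU\tilde\bV^\top$ with both satisfying~(A). Left-multiplying by $(\bX^\top\bX)^{-1}\bX^\top$, the orthogonality conditions $\bX^\top\bGamma = \bzero$ and $\bX^\top\bU = \bzero$ (and the analogous tilded versions) annihilate the last two terms on each side, yielding $\bB^\top = (\bX^\top\bX)^{-1}\bX^\top\bH = \tilde\bB^\top$, where the inverse exists by full column rank of $\bX$. Right-multiplying what remains by $\bZ(\bZ^\top\bZ)^{-1}$ and using $\bZ^\top\bV = \bzero$ likewise isolates $\bGamma = \tilde\bGamma$. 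Subtracting the two identified regression terms from $\bH$ then pins down $\bM = \bU\bV^\top$ uniquely.

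For the second stage, I focus on (B1); the cases (B2) and (B3) are handled by analogous orthogonalization/QR arguments, and the claimed equivalence among the parametrizations follows by exhibiting explicit invertible maps between them. Under (B1), the identity $\bM^\top\bM = \bV\bU^\top\bU\bV^\top = \bV\bSigma\bV^\top$ is the eigendecomposition of the symmetric positive semidefinite matrix $\bM^\top\bM$, so $\bSigma$ is uniquely determined as its ordered spectrum and $\bV$ is determined up to column sign flips (and, in the degenerate case of multiplicities, up to a block-orthogonal rotation within each eigenspace). The sign-flip ambiguity is removed by the requirement that the first non-zero entry of each column of $\bV$ be positive, and $\bU$ is then recovered uniquely from $\bU = \bM\bV$ because $\bV^\top\bV = \bI_d$ implies $\bM\bV = \bU\bV^\top\bV = \bU$.

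The main obstacle is the treatment of coincident singular values: when two diagonal entries of $\bSigma$ are equal, the corresponding columns of $\bV$ are only determined up to an orthogonal rotation in the shared eigenspace, which is precisely the classical non-uniqueness of the SVD. I will address this by phrasing the conclusion as \emph{computational} identifiability, meaning that a deterministic post-processing map produces a unique output: concretely, compute $\bB = (\bX^\top\bX)^{-1}\bX^\top\bH$, orthogonalize by setting $\bGamma \gets (\bI_n - \bX(\bX^\top\bX)^{-1}\bX^\top)\,\bGamma$ and $\bU \gets (\bI_n - \bX(\bX^\top\bX)^{-1}\bX^\top)\,\bU$, analogously project $\bV$ onto the orthogonal complement of $\bZ$, take a thin SVD of the resulting $\bU\bV^\top$ with a fixed tie-breaking rule among equal singular values, and finally enforce positivity of the leading non-zero entries in each column of $\bV$. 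This projection leaves the linear predictor, and hence the likelihood, invariant, so it furnishes both a proof of identifiability under~(A)+(B) and the practical post-processing routine referenced in Section~\ref{subsec:parameter_identifiability}.
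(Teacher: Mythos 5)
Your proposal is correct and follows the same two-stage architecture as the paper's proof: constraint (A) is used first to isolate $\bB$, $\bGamma$, and the product $\bU\bV^\top$ (the paper's Lemma~1, proved by exactly your left-multiplication by $\bX^\top$ and right-multiplication by $\bZ$, together with the non-singularity of $\bX^\top\bX$ and $\bZ^\top\bZ$), and the (B) constraints are then used to split the product. Where you diverge is in the second stage for (B1): the paper compares two admissible factorizations $\bU_1\bV_1^\top = \bU_2\bV_2^\top$, shows the transition matrix $\bA = \bV_2^\top\bV_1$ is orthogonal and satisfies $\bA^\top\bSigma\bA = \bSigma$, concludes $\bA$ is a diagonal sign matrix, and kills the signs with the positivity convention; you instead recover $\bSigma$ and $\bV$ constructively from the eigendecomposition of $\bM^\top\bM$ with $\bM = \bU\bV^\top$, and then set $\bU = \bM\bV$. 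The two arguments are equivalent in substance, but yours buys something the paper's does not: you explicitly confront the degeneracy of coincident singular values, whereas the paper's step from $\bA^\top\bSigma\bA = \bSigma$ to ``$\bA$ is diagonal'' silently assumes the diagonal entries of $\bSigma$ are distinct (otherwise $\bA$ may rotate within a repeated eigenspace and the sign convention alone does not resolve the ambiguity); your deterministic tie-breaking rule inside the post-processing map is an honest way to restore uniqueness in that case, and it matches the spirit of the ``computational identifiability'' claimed in the statement. One caveat: (B3) is not merely ``analogous'' to (B1)/(B2) --- the paper's Lemma~3 requires a genuinely separate argument in which the transition matrix is shown to be simultaneously orthogonal and lower triangular with positive diagonal (hence the identity), using the uniqueness of the QR-type factorization of the leading $d \times d$ block of $\bV$; your proposal names the right tool but should execute that step rather than defer it.
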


To streamline the proof of Proposition \ref{prop:model_identifiability}, we first state and demonstrate the following Lemmas.

\begin{lemma}
	\label{lmm:identifiability_under_A}
	Let $(\bB_1, \bGamma_1, \bU_1, \bV_1)$ and $(\bB_2, \bGamma_2, \bU_2, \bV_2)$ be two configurations of the parameters, both satisfying (A) and yielding the same linear predictor.
	Then $\bB_1 = \bB_2$, $\bGamma_1 = \bGamma_2$, and $\bU_1 \bV_1^\top = \bU_2 \bV_2^\top$.
\end{lemma}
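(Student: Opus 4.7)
The plan is to form the difference of the two linear predictors and then use the orthogonality constraints in (A) to successively isolate the contributions of $\bB$, $\bGamma$, and $\bU\bV^\top$. Writing $\bEta_k = \bX\bB_k^\top + \bGamma_k \bZ^\top + \bU_k \bV_k^\top$ for $k=1,2$, the hypothesis $\bEta_1 = \bEta_2$ rearranges to
\begin{equation*}
    \bX(\bB_1 - \bB_2)^\top + (\bGamma_1 - \bGamma_2)\bZ^\top + \bU_1\bV_1^\top - \bU_2\bV_2^\top = \bzero.
\end{equation*}
This single matrix identity is the only thing I need to exploit; there are no genuine obstacles, just two algebraic projections.

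First I would multiply the identity on the left by $\bX^\top$. By assumption $\bX^\top \bGamma_k = \bzero$ and $\bX^\top \bU_k = \bzero$ for both $k=1,2$, so the last two blocks drop out and the identity collapses to $\bX^\top\bX (\bB_1 - \bB_2)^\top = \bzero$. Since (A) also gives that $\bX$ is full column rank, $\bX^\top\bX$ is non-singular, so $\bB_1 = \bB_2$.

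Next I would multiply the (now simplified) identity on the right by $\bZ$. Using $\bZ^\top\bV_k = \bzero$, equivalently $\bV_k^\top \bZ = \bzero$, the residual term $(\bU_1\bV_1^\top - \bU_2\bV_2^\top)\bZ$ vanishes, leaving $(\bGamma_1 - \bGamma_2)\bZ^\top\bZ = \bzero$. Because $\bZ$ is full column rank, $\bZ^\top\bZ$ is invertible and hence $\bGamma_1 = \bGamma_2$. Plugging these two equalities back into the original identity yields $\bU_1\bV_1^\top = \bU_2\bV_2^\top$, which completes the three claims of the lemma.

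Strictly speaking there is no hard step in this argument; the whole point is that constraints (A) were designed to make the column spaces of $\bGamma$, $\bU$ orthogonal to that of $\bX$ and the column space of $\bV$ orthogonal to that of $\bZ$, so that left/right multiplication by $\bX^\top$ and $\bZ$ acts as a clean projection separating the three additive blocks. The only point worth flagging is that the argument does \emph{not} yet pin down $\bU$ and $\bV$ individually; this remaining rotation/scaling/sign ambiguity in the factor $\bU\bV^\top$ is exactly what conditions (B1)--(B3) are introduced to handle, and it will be the genuine content of the follow-up step in the proof of Proposition~\ref{prop:model_identifiability}.
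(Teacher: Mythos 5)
Your proposal is correct and follows essentially the same route as the paper's proof: left-multiplication by $\bX^\top$ to isolate $\bB$, right-multiplication by $\bZ$ to isolate $\bGamma$, and substitution back to conclude $\bU_1\bV_1^\top = \bU_2\bV_2^\top$. Your closing remark that the factors $\bU$ and $\bV$ are not yet individually identified is also exactly the role the paper assigns to conditions (B1)--(B3) in the subsequent lemmas.
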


\begin{proof}
	By hypothesis, $(\bB_1, \bGamma_1, \bU_1, \bV_1)$ and $(\bB_2, \bGamma_2, \bU_2, \bV_2)$ yield the same linear predictor, that is
	\begin{equation}
		\label{eq:paired_linear_predictors}
		\bet_1 = \bX \bB_1^\top + \bGamma_1 \bZ^\top + \bU_1 \bV_1^\top 
		\;=\; 
		\bX \bB_2^\top + \bGamma_2 \bZ^\top + \bU_2 \bV_2^\top = \bet_2.
	\end{equation}
	Left-multiplying both sides of \eqref{eq:paired_linear_predictors} by $\bX^\top$, and using of the constraints $\bX^\top \bGamma_1 = \bX^\top \bGamma_2 = \bzero$ and $\bX^\top \bU_1 = \bX^\top \bU_2 = \bzero$, we obtain $(\bX^\top \bX) \bB_1 = (\bX^\top \bX) \bB_2$. Thanks to the non-singularity of $\bX^\top \bX$, we conclude that $\bB_1 = \bB_2$. 
	
	Subtracting the common term $\bX \bB_1^\top$ (equal to $\bX \bB_2^\top$) from both sides of \eqref{eq:paired_linear_predictors}, we obtain $\bGamma_1 \bZ^\top + \bU_1 \bV_1^\top = \bGamma_2 \bZ^\top + \bU_2 \bV_2^\top$.
	Right-multiplying by $\bZ$, and using the constraints $\bZ^\top \bV_1 = \bZ^\top \bV_2 = \bzero$, we get $\bGamma_1 (\bZ^\top \bZ) = \bGamma_2 (\bZ^\top \bZ)$.
	This, along with the non-singularity of $\bZ^\top \bZ$, implies $\bGamma_1 = \bGamma_2$.
	
	Moreover, given the identities $\bB_1 = \bB_2$ and $\bGamma_1 = \bGamma_2$, equation \eqref{eq:paired_linear_predictors} entails $\bU_1 \bV_1^\top = \bU_2 \bV_2^\top$.
	This concludes the proof.
\end{proof}

\begin{lemma}
	\label{lmm:identifiability_under_B1_B2}
	Let $(\bU_1, \bV_1)$ and $(\bU_2, \bV_2)$ be two configurations of the score and loading parameters, both satisfying either (B1) or (B2), and $\bU_1 \bV_1^\top = \bU_2 \bV_2^\top$.
	Then $\bU_1 = \bU_2$ and $\bV_1 = \bV_2$.
\end{lemma}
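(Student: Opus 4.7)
The plan is to reduce the lemma to the uniqueness of the singular value decomposition (SVD) of the product $\bM := \bU_1 \bV_1^\top = \bU_2 \bV_2^\top$. Under either (B1) or (B2), the factorization $\bU \bV^\top$ is essentially an SVD of $\bM$ with a fixed sign convention, so the claim will follow once we pin down the common spectral data and then eliminate the residual sign ambiguity.

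First I would treat case (B1). Starting from $\bM = \bU_k \bV_k^\top$ for $k = 1, 2$, I compute
\begin{equation*}
    \bM^\top \bM = \bV_k (\bU_k^\top \bU_k) \bV_k^\top = \bV_k \bSigma_k \bV_k^\top,
\end{equation*}
where the columns of $\bV_k$ are orthonormal and $\bSigma_k$ is diagonal with entries in strictly decreasing order. This is a spectral decomposition of $\bM^\top \bM$, so uniqueness of eigenvalues (counted with multiplicity) forces $\bSigma_1 = \bSigma_2 =: \bSigma$. Assuming the non-zero singular values are distinct (the regime in which (B1) yields actual identifiability), the eigenspaces of $\bM^\top \bM$ are one-dimensional, so each column of $\bV_2$ must equal $\pm$ the corresponding column of $\bV_1$. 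The sign convention in (B1) — that the first non-zero entry of each column of $\bV$ is positive — then eliminates the sign ambiguity and gives $\bV_1 = \bV_2$. Finally, right-multiplying $\bU_1 \bV_1^\top = \bU_2 \bV_1^\top$ by $\bV_1$ and using $\bV_1^\top \bV_1 = \bI_d$ yields $\bU_1 = \bU_2$.

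The argument for (B2) is entirely symmetric: one works with $\bM \bM^\top = \bU_k \bSigma_k \bU_k^\top$ to recover $\bU_k$ up to column signs, uses the sign convention on $\bU$ to fix the signs, and then recovers $\bV_k$ by left-multiplying $\bM$ by $\bU_1^\top$ and using $\bU_1^\top \bU_1 = \bI_d$. The main obstacle is the assumption of distinct non-zero singular values: if some $\sigma_i$ coincide, then the corresponding eigenspace has dimension larger than one and the SVD is only unique up to an orthogonal rotation within that eigenspace, which no sign convention on individual columns can eliminate. This is the classical caveat for SVD-based identifiability and is implicit in the decreasing-order convention on $\bSigma$; I would state it explicitly as the hypothesis under which the lemma holds. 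Given this, the identifiability of the post-processing projection onto the constrained parametrization follows by the standard SVD computation, which I would reference rather than reproduce in detail.
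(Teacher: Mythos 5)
Your proof is correct and rests on the same underlying fact as the paper's — uniqueness of the singular value decomposition — but you route it differently. The paper works directly with the transition matrix $\bA = \bV_2^\top \bV_1$: right- and left-multiplying the identity $\bU_1\bV_1^\top = \bU_2\bV_2^\top$ by $\bV_1$ and $\bU_2^\top$ gives $\bU_1 = \bU_2\bA$ and $\bV_2 = \bV_1\bA^\top$, then the constraints force $\bA\bA^\top = \bI_d$ and $\bA^\top\bSigma\bA = \bSigma$, from which the paper concludes $\bA$ is a diagonal sign matrix and the sign convention kills it. You instead pass to $\bM^\top\bM = \bV_k\bSigma_k\bV_k^\top$ and invoke uniqueness of the spectral decomposition. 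The two arguments are equivalent in substance; yours is slightly shorter to state but leans on the spectral theorem as a black box, while the paper's is self-contained elementary algebra. One point in your favour: you explicitly flag that the argument needs the non-zero singular values to be \emph{distinct}. The paper's proof has exactly the same dependence — the step asserting that an orthogonal $\bA$ satisfying $\bA^\top\bSigma\bA = \bSigma$ must be a diagonal sign matrix fails when $\bSigma$ has repeated entries, since then $\bA$ may be any orthogonal block acting within the repeated eigenspace — but the paper leaves this implicit in the ``decreasing order'' convention. Making the hypothesis explicit, as you do, is the more careful statement of the lemma.
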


\begin{proof}
	First, we consider assumption (B1).
	Right-multiplying $\bU_1 \bV_1^\top = \bU_2 \bV_2^\top$ on both sides by $\bV_1$, and exploiting $\bV_1^\top \bV_1 = \bI_d$, we obtain $\bU_1 = \bU_2 (\bV_2^\top \bV_1)$.
	Let $\bA = \bV_2^\top \bV_1$, then, substituting $\bU_1 = \bU_2 \bA$ into $\bU_1 \bV_1^\top = \bU_2 \bV_2^\top$, left-multiplying the result by $\bU_2^\top$, and using the non-singularity of $\bU_2^\top \bU_2$, we obtain $\bV_2 = \bV_1 \bA^\top$.
	
	Sequentially applying $\bV_2^\top \bV_2 = \bI_d$, $\bV_2 = \bV_1 \bA^\top$, and $\bV_1^\top \bV_1 = \bI_d$, we have
	\begin{equation*}
		\bI_d = \bV_2^\top \bV_2 = \bA \bV_1^\top \bV_1 \bA^\top = \bA \bI_d \bA^\top = \bA \bA^\top,
	\end{equation*}
	which implies that $\bA$ is an orthogonal matrix.
	
	Similarly, we sequentially use $\bU_1^\top \bU_1 = \bSigma$, $\bU_1 = \bU_2 \bA$, and $\bU_2^\top \bU_2 = \bSigma$ to get
	\begin{equation*}
		\bSigma = \bU_1^\top \bU_1 = \bA^\top \bU_2^\top \bU_2 \bA = \bA^\top \bSigma \bA.
	\end{equation*}
	Thus, $\bA^\top \bSigma \bA$ is the eigenvalue decomposition of $\bSigma$, where $\bA$, due to the orthogonality, must be a diagonal sign matrix whose diagonal entries can only take values $-1$ or $+1$. 
	Finally, since $\bV_2 = \bV_1 \bA^\top$ and the first non-zero entry of each column of $\bV_1$ and $\bV_2$ must be positive because of (B1), we have $\bA = \bI_d$.
	As a consequence, $\bU_1 = \bU_2$ and $\bV_1 = \bV_2$. This concludes the proof for (B1).
	
	The proof under (B2) trivially follows from the same argument.
\end{proof}

\begin{lemma}
	\label{lmm:identifiability_under_B3}
	Let $(\bU_1, \bV_1)$ and $(\bU_2, \bV_2)$ be two configurations of the score and loading parameters, both satisfying (B3) and $\bU_1 \bV_1^\top = \bU_2 \bV_2^\top$.
	Then $\bU_1 = \bU_2$ and $\bV_1 = \bV_2$.
\end{lemma}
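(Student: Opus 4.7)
The plan is to mirror the strategy of Lemma~\ref{lmm:identifiability_under_B1_B2}: first reduce the problem to identifying a unique transition matrix $\bA$ between the two parametrizations, then use the structural constraints in (B3) to show $\bA = \bI_d$.

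First, I would note that the standardization condition $\frac{1}{n}\bU_i^\top(\bI_n - \frac{1}{n}\bone_n\bone_n^\top)\bU_i = \bI_d$ forces both $\bU_1$ and $\bU_2$ to have full column rank. Combined with $\bU_1\bV_1^\top = \bU_2\bV_2^\top$, this guarantees the existence of a unique invertible $d \times d$ matrix $\bA$ with $\bU_2 = \bU_1\bA$ and, by left-multiplying the factorization identity by $(\bU_1^\top\bU_1)^{-1}\bU_1^\top$, $\bV_2 = \bV_1\bA^{-\top}$. Plugging $\bU_2 = \bU_1\bA$ into the standardization constraint for $\bU_2$ and simplifying via the analogous identity for $\bU_1$ yields $\bA^\top\bA = \bI_d$, so $\bA$ is orthogonal and $\bV_2 = \bV_1\bA$.

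The main obstacle, and the step that genuinely distinguishes this lemma from Lemma~\ref{lmm:identifiability_under_B1_B2}, is exploiting the triangularity with positive diagonal entries to force $\bA = \bI_d$. My plan is to restrict attention to the top $d \times d$ block of each $\bV_i$, which I denote $\bV_i^{(1)}$; by assumption this block is lower triangular with strictly positive diagonal entries and hence invertible. The relation $\bV_2^{(1)} = \bV_1^{(1)}\bA$ rearranges to $\bA = (\bV_1^{(1)})^{-1}\bV_2^{(1)}$, which is the product of two lower triangular matrices and is therefore lower triangular. A lower triangular orthogonal matrix is necessarily diagonal (since its inverse equals its transpose, which is upper triangular), and such a diagonal orthogonal matrix must have entries $\pm 1$.

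Finally, comparing diagonal entries via $(\bV_2^{(1)})_{kk} = (\bV_1^{(1)})_{kk}\,a_{kk}$ and using the positivity of both sides forces $a_{kk} = +1$ for every $k$, hence $\bA = \bI_d$. Substituting back gives $\bU_1 = \bU_2$ and $\bV_1 = \bV_2$, completing the proof. The cleanliness of the argument hinges on the top-block reduction: without it, one would need a direct LQ-type uniqueness statement for tall lower triangular matrices, which is slightly awkward to phrase; the top-block trick sidesteps this and keeps the proof parallel in spirit to the (B1)/(B2) case.
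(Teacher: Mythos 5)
Your proposal is correct and follows essentially the same route as the paper's proof: extract a transition matrix $\bA$ relating the two factor pairs, show it is orthogonal via the standardization constraint $\frac{1}{n}\bU_i^\top(\bI_n - \frac{1}{n}\bone_n\bone_n^\top)\bU_i = \bI_d$, and then use the top $d \times d$ lower-triangular blocks of $\bV_1, \bV_2$ with positive diagonals to force $\bA = \bI_d$. The only cosmetic differences are that the paper constructs $\bA = \bU_1^\top \bH_n \bU_2$ explicitly rather than invoking a rank-based existence argument, and that your deduction of $\bU_2 = \bU_1\bA$ tacitly also requires $\bV_1, \bV_2$ to have full column rank, which (B3) automatically guarantees.
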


\begin{proof}
	Following the same reasoning used in the proof of Lemma \ref{lmm:identifiability_under_B1_B2}, and defining $\bH_n = \frac{1}{n} (\bI_n - \frac{1}{n} \bone_n \bone_n^\top)$, we can left-multiply $\bU_1 \bV_1^\top = \bU_2 \bV_2^\top$ on both sides by $\bU_1^\top \bH_n$ and use $\bU_1^\top \bH_n \bU_1 = \bI_d$ to obtain the identity $\bV_1 = \bV_2 \bA^\top$, where $\bA = \bU_1^\top \bH_n \bU_2$.
	Substituting $\bV_1 = \bV_2 \bA^\top$ into $\bU_1 \bV_1^\top = \bU_2 \bV_2^\top$, right-multiplying by $\bV_2$, and exploiting the non-singularity of $\bV_2^\top \bV_2$, we obtain $\bU_2 = \bU_1 \bA$.
	Hence, sequentially applying $\bU_2^\top \bH_n \bU_2 = \bI_d$, $\bU_2 = \bU_1 \bA$, and $\bU_1^\top \bH_n \bU_1 = \bI_d$, we have
	\begin{equation*}
		\bI_d 
		= \bU_2^\top \bH_n \bU_2
		= \bA^\top \bU_1^\top \bH_n \bU_1 \bA
		= \bA^\top \bI_d \bA = \bA^\top \bA,
	\end{equation*}
	which entails that $\bA$ is orthogonal.
	
	Now, we know that $\bV_1 = \bV_2 \bA^\top$, where $\bV_1$ and $\bV_2$ are $m \times d$ matrices with zero upper-triangular part, and positive diagonal. 
	Then, defining $\bR_1$ and $\bR_2$ as the $d \times d$ submatrices containing the first $d$ rows of $\bV_1$ and $\bV_2$, respectively, we have $\bR_1 = \bR_2 \bA^\top$.
	By construction, $\bR_1$ and $\bR_2$ are non-singular squared lower-triangular matrices; thus we can write $\bR_2^{-1} \bR_1 = \bA^\top$.
	As $\bR_2$ and $\bR_1$ are lower-triangular with positive diagonal entries, $\bR_2^{-1} \bR_1$ also inherits the same properties, as well as $\bA^\top$, which is also orthogonal.
	Recall that orthogonal triangular matrices are, in fact, diagonal sign matrices, then $\bA$ is diagonal with $-1$ or $+1$ diagonal entries.
	But $\bA^\top = \bR_2^{-1} \bR_1$ must also have positive diagonal entries, hence $\bA = \bI_d$. 
	And, finally, we can establish that $\bU_1 = \bU_2$ and $\bV_1 = \bV_2$. This concludes the proof.
\end{proof}

Now, we are ready to prove Proposition \ref{prop:model_identifiability}.

\begin{proof}[Proof of Proposition \ref{prop:model_identifiability}]
	The likelihood function of model~\eqref{eq:exponential_family_distribution}--\eqref{eq:linear_predictor_model} depends on the parameters $(\bB, \bGamma, \bU, \bV)$ solely through the linear predictor~\eqref{eq:linear_predictor_model}.
	Then, if we can prove that, under (A) and (B), equal linear predictors imply equal parameter values, this automatically ensures identifiability.
	
	Suppose that $(\bB_1, \bGamma_1, \bU_1, \bV_1)$ and $(\bB_2, \bGamma_2, \bU_2, \bV_2)$ are two parameter configurations, both satisfying assumption (A) and yielding the same linear predictor.
	Then, thanks to assumption (A) and Lemma \ref{lmm:identifiability_under_A}, we have $\bB_1 = \bB_2$, $\bGamma_1 = \bGamma_2$, and $\bU_1 \bV_1^\top = \bU_2 \bV_2^\top$.
	This means that assumption (A) alone is enough to ensure the identifiability of $\bB$ and $\bGamma$. 
	Now, we are left to prove the identifiability of $\bU$ and $\bV$ under one of the equivalent additional restrictions  (B1), (B2), and (B3).
	
	Assuming that $(\bU_1, \bV_1)$ and $(\bU_2, \bV_2)$ satisfy both (A) and (B1), from Lemma \ref{lmm:identifiability_under_B1_B2} we have $\bU_1 = \bU_2$ and $\bV_1 = \bV_2$.
	The same holds under (A) and (B2).
	This ensures the identifiability of $(\bB, \bGamma, \bU, \bV)$ under assumptions (A) and (B1), or (A) and (B2).
	
	Finally, assuming that $(\bU_1, \bV_1)$ and $(\bU_2, \bV_2)$ satisfy both (A) and (B3), from Lemma \ref{lmm:identifiability_under_B3} we have $\bU_1 = \bU_2$ and $\bV_1 = \bV_2$.
	This ensures the identifiability of $(\bB, \bGamma, \bU, \bV)$ under assumptions (A) and (B3).
	This concludes the proof.
\end{proof}

To obtain identifiable estimates that satisfy (A) and any of (B1), (B2), or (B3), we need a stable and efficient post-processing method to project unrestricted solutions onto the constrained space induced by the identifiability restrictions.
To this end, we can apply standard projection methods. 
Consider, for instance, the constraint $\bX^\top \bGamma = \bzero$, and define $\bP_{\scx} = \bX (\bX^\top \bX)^{-1} \bX^\top$ as the projection matrix onto the column space of $\bX$.
As it is evident, $\bGamma = \bP_{\scx} \bGamma + (\bI_n - \bP_{\scx}) \bGamma$, hence
\begin{align*}
	\bX \bB^\top + \bGamma \bZ^\top 
	& = \bX \bB^\top + \bP_{\scx} \bGamma \bZ^\top + (\bI_n - \bP_{\scx}) \bGamma \bZ^\top \\
	& = \bX \bB^\top + \bX \big[ (\bX^\top \bX)^{-1} \bX^\top \bGamma \bZ^\top \big] + \big[ \,\bI_n - \bX (\bX^\top \bX)^{-1} \bX^\top \big] \bGamma \bZ^\top \\
	& = \bX \big[ \bB + \bZ \bGamma^\top \bX (\bX^\top \bX)^{-1} \big]{}^\top + \big[ \bGamma - \bX (\bX^\top \bX)^{-1} \bX^\top \bGamma \big] \bZ^\top.
\end{align*}
Therefore, we can reparametrize both $\bB$ and $\bGamma$ as follows
\begin{equation*}
	\bB^* = \bB + \bZ \big[ (\bX^\top \bX)^{-1} \bX^\top \bGamma \big]{}^\top, \quad
	\bGamma^* = \bGamma - \bX \big[ (\bX^\top \bX)^{-1} \bX^\top \bGamma \big].
\end{equation*}
In this way, by construction, $\bGamma^*$ lies on the orthogonal complement of the column space of $\bX$, thus $\bX^\top \bGamma^* = \bzero$. 
Of course, we also need to transform $\bB$ to ensure that the identity $\bX \bB + \bGamma \bZ^\top = \bX \bB^{*\top} + \bGamma^* \bZ^\top$ holds true.

The same approach can be used to project $\bU$ and $\bV$ onto the orthogonal complement of the column space of $\bX$ and $\bZ$, respectively.
Then, an effective algorithm to obtain orthogonality with respect to the covariate column space is to sequentially project $\bGamma$, $\bU$, and $\bV$ onto the appropriate orthogonal complement, as shown in the first three rows of Algorithm~\ref{alg:projection_algorithm}.

Finally, to enforce one of (B1) or (B2), we can use standard reparametrizations based on the singular value decomposition of $\bU \bV^\top$ \citep[see, e.g.,][]{Risso2018, Liu2024}.
While (B3) requires first to rotate $\bU$ and $\bV$ using a whitening matrix, and then to triangularize $\bV$ using the QR decomposition \citep[see, e.g.,][]{Kidzinski2022}.
A detailed pseudo-code description of all the projection steps and the relative computational costs is outlined in Algorithm~\ref{alg:projection_algorithm}.

\begin{algorithm}[h!]
	\caption{%
		\label{alg:projection_algorithm}
		Pseudo-code description of the post-processing algorithm used to project an unrestricted estimate of the GMF model parameters onto the constrained space induced by the identifiability conditions (A), (B1), (B2), and (B3).
		On the right, we report the computational complexity of each step. 
		The compact notation \textrm{svd()} and \textrm{qr()} stand for the singular value and QR decomposition, respectively.
	}
	
	$\bD_{\scriptscriptstyle\Gamma} \gets (\bX^\top \bX)^{-1} \bX^\top \bGamma$; \quad
	$\bGamma \gets \bGamma - \bX \bD_{\scriptscriptstyle\Gamma}$; \quad
	$\bB \gets \bB + \bZ \bD_{\scriptscriptstyle\Gamma}^\top$; \hfill
	$O(p^3 + np^2 + npq)$ \\
	
	$\bD_{\scu} \gets (\bX^\top \bX)^{-1} \bX^\top \bU$; \quad
	$\bU \gets \bU - \bX \bD_{\scu}$; \quad
	$\bB \gets \bB + \bV \bD_{\scu}^\top$; \hfill
	$O(p^3 + np^2 + npd)$ \\
	
	$\bD_{\scv} \gets (\bZ^\top \bZ)^{-1} \bZ^\top \bV$; \quad
	$\bV \gets \bV - \bZ \bD_{\scv}$; \quad
	$\bGamma \gets \bGamma + \bU \bD_{\scv}^\top$; \hfill 
	$O(q^3 + mq^2 + mqd)$ \\
	
	\If{(B1)}{
		$\tilde\bU, \tilde\bSigma, \tilde\bV \gets \mathrm{svd}(\bU \bV^\top)$; \quad
		$\bU \gets \tilde\bU \tilde\bSigma$; \quad 
		$\bV \gets \tilde\bV$; \hfill
		$O(nmd +nd + md)$
	}
	\If{(B2)}{
		$\tilde\bU, \tilde\bSigma, \tilde\bV \gets \mathrm{svd}(\bU \bV^\top)$; \quad
		$\bU \gets \tilde\bU$; \quad 
		$\bV \gets \tilde\bV \tilde\bSigma$; \hfill
		$O(nmd + nd + md)$
	}
	\If{(B3)}{
		$\bS \gets \frac{1}{n} (\bU^\top\bU - \frac{1}{n} \bU^\top \bone_n \bone_n^\top \bU)$; \hfill 
		$O(nd^2 + nd + d^2)$ \\
		
		$\bU \gets \bU \bS^{-1/2}$; \quad
		$\bV \gets \bV \bS^{1/2}$; \hfill 
		$O(nd^2 + md^2 + d^3)$ \\
		
		$\bQ, \bR \gets \mathrm{qr}(\bV^\top)$; \quad
		$\bU \gets \bU \bQ$; \quad
		$\bV \gets \bR^\top$; \hfill 
		$O(md^2 + nd^2)$ \\
		
		$\bD \gets \diag(\bV)$; \quad
		$\bU \gets \bU \bD$; \quad
		$\bV \gets \bV \bD$; \hfill 
		$O(nd + md)$ \\
	}
\end{algorithm}

\section{Additional algorithmic details}%
\label{app:algorithmic_details}

\subsection*{Stochastic gradient with non-zero covariate effects}
\label{app:stochastic_gradient_with_nonzero_covariate_effects}

In the general case where $\bB \neq \bzero$ and $\bGamma \neq \bzero$, in the optimization we must include an explicit update for $\bB$ and $\bGamma$. 
To this end, we define the first and second elementwise derivatives of the penalized deviance function with respect to $[\bGamma, \bU]$ and $[\bB, \bV]$ as
\begin{align*}
    \frac{\partial \ell_\lambda}{\partial [\bGamma, \bU]} = \dt{\bD} \,[\bZ, \bV] + \lambda [\bO, \bU], \qquad &
    \frac{\partial \ell_\lambda}{\partial [\bB, \bV]} = \dt{\bD}^\top [\bX, \bU] + \lambda [\bO, \bV], \\
    \frac{\partial^2 \ell_\lambda}{\partial [\bGamma, \bU]^2} = \ddt{\bD} \,[\bB * \bB, \bV * \bV] + [\bO, \bLambda], \qquad &
    \frac{\partial^2 \ell_\lambda}{\partial [\bB, \bV]^2} = \ddt{\bD}^\top [\bX * \bX, \bU * \bU] + [\bO, \bLambda].
\end{align*}
At the $t$th iteration of the proposed adaptive stochastic gradient descent algorithm, the above derivatives can be unbiasedly estimated by generalizing equations \eqref{eq:blockwise_minibatch_gradients}, thus obtaining 
\begin{align*}
    \hat\bG_{[\scgam, \scu],\row :}^t = (m / \ncol) \,\dt\bD_{\mb}^t [\bZ_{\col, :}, \bV_{\col :}^t] + \lambda [\bO, \bU_{\row :}^t], \qquad &
    \hat\bH_{[\scgam, \scu],\row :}^t = (m / \ncol) \,\ddt\bD_{\mb}^t [\bZ_{\col :} * \bZ_{\col :}, \bV_{\col :}^t * \bV_{\col :}^t] + [\bO, \bLambda], \\
    \hat\bG_{[\scb, \scv],\col :}^t = (n / \nrow) \,\dt\bD_{\mb}^{t \top} [\bX_{\row, :}, \bU_{\row :}^t] + \lambda [\bO, \bV_{\col :}^t], \qquad &
    \hat\bH_{[\scb, \scv],\col :}^t = (n / \nrow) \,\ddt\bD_{\mb}^{t \top} [\bX_{\row :} * \bX_{\row :}, \bU_{\row :}^t * \bU_{\row :}^t] + [\bO, \bLambda].
\end{align*}
As a result, the joint stochastic update for the regression parameters, $\bB$ and $\bGamma$, and latent variables, $\bV$ and $\bU$, is obtained as in \eqref{eq:blockwise_stochastic_quasi_newton_update}: 
\begin{align*}
    [\bGamma, \bU]_{\row :}^{t+1} \gets [\bGamma, \bU]_{\row :}^t + \rho_t \,\bDelta_{[\scgam, \scu], \row :}^t, \qquad & 
    \bDelta_{[\scgam, \scu], \row :}^t = - \alpha_t (\bar\bG_{[\scgam, \scu], \row :}^t \,\big/ \,\bar\bH_{[\scgam, \scu], \row :}^t), \\
    [\bB, \bV]_{\col :}^{t+1} \gets [\bB, \bV]_{\col :}^t + \rho_t \,\bDelta_{[\scb, \scv], \col :}^t, \qquad & 
    \bDelta_{[\scb, \scv], \col :}^t = - \alpha_t (\bar\bG_{[\scb, \scv], \col :}^t \,\big/ \,\bar\bH_{[\scb, \scv], \col :}^t).
\end{align*}
where $\bar\bG_{[\scgam, \scu],\row :}^t$, $\bar\bG_{[\scb, \scv],\col :}^t$, $\bar\bH_{[\scgam, \scu],\row :}^t$ and $\bar\bH_{[\scb, \scv],\col :}^t$ are the smoothed differentials computed via the exponential averaging in \eqref{eq:gradient_exponential_averaging}.

\subsection*{Unknown dispersion parameter}%
\label{app:unknown_dispersion_parameter}

In the case where the dispersion parameter $\phi$ is unknown and has to be learned from the data, a standard choice in the literature is the Pearson estimator, which is given by
\begin{equation*}
    \hat\phi 
    = \frac{1}{N} \sum_{i = 1}^{n} \sum_{j = 1}^{m} \frac{(y_{ij} - \hat\mu_{ij})^2}{\nu(\hat\mu_{ij}) / w_{ij}}
    = \frac{1}{N} \bone_n^\top \big[ \bW * (\bY - \hat\bmu)^2 / \nu(\hat\bmu) \big] \bone_m.
\end{equation*}
where $N = nm - mp - nq - (n+m)d - 1$ is the effective degrees of freedom of the model, that is the difference between the number of observations and the number of parameters to be estimated.
This can be computed \emph{a posteriori} or iteratively refined during the optimization substituting $\hat\bmu$ with $\bmu^t$.

In our optimization routine, we consider a sequential refinement of the dispersion parameter using a smoothed stochastic estimator obtained as
\begin{align*}
    \hat\phi^{t+1} & \gets \frac{1}{N}\frac{nm}{\nrow \ncol} \bone_{\nrow}^\top \big[ \bW_\mb * (\bY_\mb - \bmu_\mb^t)^2 / \nu(\bmu_\mb^t) \big] \bone_{\ncol}, \\
    \bar\phi^{t+1} & \gets (1 - \rho_t) \,\bar\phi^{t} + \rho_t \,\hat\phi^{t+1},
\end{align*}
where $\hat\phi^{t+1}$ is the stochastic estimate of $\phi$ obtained using only the information of the current mini-batch, while $\bar\phi^{t+1}$ is a smoothed estimator obtained as the exponential averaging of the current and previous estimates.

\subsection*{Negative Binomial inflation parameter}
\label{app:negative_binomial_inflation_parameter}

In the Negative Binomial model, the deviance and variance functions are specified as
\begin{equation*}
    D_\alpha (y, \mu) = 2 w \bigg[ y \log\frac{y}{\mu} - (y + \alpha) \log\frac{y + \alpha}{\mu + \alpha} \bigg], \quad
    \nu_\alpha (\mu) = w \mu (1 + \mu / \alpha),
\end{equation*}
where $\alpha$ is the shape parameter of the Negative Binomial family.
Since the shape parameter $\alpha > 0$ is typically unknown, we need to estimate it from the data. 
A common choice in the literature is to consider the moment estimator 
\begin{equation*}
    \hat\alpha 
    = \frac{\dsty \bigg[ \sum_{i=1}^{n} \sum_{j=1}^{m} w_{ij} \hat\mu_{ij}^2 \bigg]}{\dsty \bigg[ \sum_{i=1}^{n} \sum_{j=1}^{m} w_{ij} \big\{ (y_{ij} - \hat\mu_{ij})^2 - \hat\mu_{ij} \big\} \bigg] }
    = \frac{\bone_n^\top (\bW * \hat\bmu * \hat\bmu) \bone_m}{\bone_n^\top \big[ \bW * \{ (\bY - \hat\bmu)^2 - \hat\bmu \} \big] \bone_m},
\end{equation*}
where $\hat\bmu_{ij}$ must be a consistent estimator of the Negative Binomial mean.

Since complete access to the whole data and prediction matrices could be prohibitively expensive in high-dimensional settings, in our optimization scheme, we instead consider the stochastic update
\begin{align*}
    \hat\alpha^{t+1} & \gets \frac{\bone_{\nrow}^\top (\bW_\mb * \bmu_\mb^t * \bmu_\mb^t) \bone_{\ncol}}{\bone_{\nrow}^\top \big[ \bW_\mb * \{ (\bY_\mb - \bmu_\mb^t)^2 - \bmu_\mb^t \} \big] \bone_{\ncol}}, \\
    \bar\alpha^{t+1} & \gets (1 - \rho_t) \,\bar\alpha^{t} + \rho_t \max(\veps, \hat\alpha^{t+1}),
\end{align*}
where $\hat\alpha^{t+1}$ is the stochastic estimate of $\alpha$ obtained using only the information of the current mini-batch, while $\bar\alpha^{t+1}$ is a smoothed estimator obtained as the exponential averaging of the current and previous estimates, and $\veps > 0$ is a small positive constant introduced to ensure that the final estimate is positive.

\clearpage

\section{Simulation setting details}%
\label{app:simulation_setting_details}

\subsection*{Data generating process}%
\label{subapp:data_generating_process}

To simulate the data, we use the \R~package \splatter~\citep{Zappia2017}, which is freely available on \bioconductor~\citep{Huber2015}.
In particular, we use the function \texttt{splatSimulateGroups()} to generate the gene-expression matrices.

In our experiments, we considered the following simulation setup: each dataset contains cells from five well-separated types evenly distributed in the sample.
The data are also divided into three batches having different expression levels.
No lineage or branching effects are considered.
The setting-specific dimensions of the gene-expression matrices, $n$ and $m$, are reported in the paper.
Under each simulation setting, we set the simulation parameters specifying the following options in the \splatter~functions \texttt{newSplatParams()} and \texttt{setParams()}:
\begin{itemize}
    \setlength\itemsep{-0.025cm}
    \item number of genes: \texttt{nGenes = $m$};
    \item number of cells: \texttt{nCells = $n$};
    \item number of cells per batch: \texttt{batchCells = c($\lfloor n/3 \rfloor$, $\lfloor n/3 \rfloor$, $n - 2 \lfloor n/3 \rfloor$)};
    \item probability of each cell-group: \texttt{group.prob = c(0.1, 0.2, 0.2, 0.2, 0.3)};
    \item probability of 
 gene differential expression in a group: \texttt{de.prob = c(0.3, 0.1, 0.2, 0.01, 0.1)};
    \item probability of gene down-regulation in a group: \texttt{de.downProb = c(0.1, 0.4, 0.9, 0.6, 0.5)};
    \item location of the differential expression factor: \texttt{de.facLoc = c(0.6, 0.1, 0.1, 0.01, 0.2)};
    \item Scale of the differential expression factor: \texttt{de.facScale = c(0.1, 0.4, 0.2, 0.5, 0.4)}.
\end{itemize}

\subsection*{Competing methods}%
\label{subapp:competing_methods}

We compare the proposed adaptive stochastic gradient descent method for the estimation of generalized matrix factorization models with several state-of-the-art approaches in the literature.
In particular, we consider the following models and algorithms.

\begin{description}
    \item \textbf{\cmf}: we use the \texttt{CMF()} function in the \cmfrec~package \citep{Cortes2023}, and we specify the following options: \texttt{k = $d$}, \texttt{nonneg = TRUE}, \texttt{user\_bias = FALSE}, \texttt{item\_bias = FALSE}, \texttt{center = FALSE}, \texttt{nthreads = 4}, \texttt{niter = 1000}.
    \item \textbf{\nmf}: we use the \texttt{NMF()} function in the \nmfpak~package \citep{Gaujoux2010}, and we specify the following options: \texttt{rank = $d$}, \texttt{method = "brunet"}, \texttt{seed = "nndsvd"}, \texttt{nrun = 1}.
    \item \textbf{\nmfp}: we use the \texttt{NNMF()} function in the \nnlm~package \citep{Lin2020}, and we specify the following options: \texttt{k = $d$}, \texttt{alpha = 1}, \texttt{beta = 1}, \texttt{n.threads = 4}, \texttt{method = "lee"}, \texttt{loss = "mkl"}, \texttt{max.iter = 1000}.
    \item \textbf{\avagrad}: we use the \texttt{glmpca()} function in the \glmpca~package \citep{Townes2019}, and we specify the following options: \texttt{L = $d$}, \texttt{fam = "poi"}, \texttt{minibatch = "none"}, \texttt{optimizer = "avagrad"}, \texttt{ctl = list(maxIter = 1000, tol = 1e-05)}.
    \item \textbf{\fisher}: we use the \texttt{glmpca()} function in the \glmpca~package \citep{Townes2019}, and we specify the following options: \texttt{L = $d$}, \texttt{fam = "poi"}, \texttt{minibatch = "none"}, \texttt{optimizer = "fisher"}, \texttt{ctl = list(maxIter = 200, tol = 1e-05)}.
    \item \textbf{\nbwave}: we use the \texttt{newFit()} function in the \NewWave~package \citep{Agostinis2022}, and we specify the following options: \texttt{K = $d$}, \texttt{commondispersion = TRUE}, \texttt{maxiter\_optimize = 200}, \texttt{stop\_epsilon = 1e-05}, \texttt{children = 4}.
    \item \textbf{\gfmam}: we use the \texttt{gfm()} function in the \GFM~package \citep{Liu2023}, and we specify the following options: \texttt{types = "poisson"}, \texttt{q = $d$}, \texttt{offset = FALSE}, \texttt{dc\_eps = 1e-05}, \texttt{maxIter = 200}, \texttt{algorithm = "AM"}.
    \item \textbf{\gfmvem}: we use the \texttt{gfm()} function in the \GFM~package \citep{Liu2023}, and we specify the following options: \texttt{types = "poisson"}, \texttt{q = $d$}, \texttt{offset = FALSE}, \texttt{dc\_eps = 1e-05}, \texttt{maxIter = 200}, \texttt{algorithm = "VEM"}.
    \item \textbf{\coap}: we use the \texttt{COAP\_RR()} function in the \COAP~package \citep{Liu2024}, and we specify the following options: \texttt{Z = $\bX$}, \texttt{q = $d$}, \texttt{epsELBO = 1e-05}, \texttt{maxIter = 100}, \texttt{joint\_opt\_beta = FALSE}, \texttt{fast\_svd = FALSE}.
    \item \textbf{\airwls}: we use the \texttt{cpp.fit.airwls()} function in the \sgdGMF~package, which implement the AIRWLS algorithm of \cite{Kidzinski2022} and \cite{Wang2023}, and we specify the following options: \texttt{ncomp = $d$}, \texttt{familyname = "poisson"}, \texttt{linkname = "log"}, \texttt{lambda = c(0,0,1,0)}, \texttt{maxiter = 200}, \texttt{nsteps = 1}, \texttt{stepsize = 0.2}, \texttt{eps = 1e-08}, \texttt{nafill = 1}, \texttt{tol = 1e-05}, \texttt{damping = 0.001}, \texttt{parallel = TRUE}, \texttt{nthreads = 4}.
    \item \textbf{\newton}: we use the \texttt{cpp.fit.newton()} function in the \sgdGMF~package, which implements the quasi-Newton algorithm of \cite{Kidzinski2022}, and we specify the following options: \texttt{ncomp = $d$}, \texttt{familyname = "poisson"}, \texttt{linkname = "log"}, \texttt{lambda = c(0,0,1,0)}, \texttt{maxiter = 200}, \texttt{stepsize = 0.2}, \texttt{eps = 1e-08}, \texttt{nafill = 1}, \texttt{tol = 1e-05}, \texttt{damping = 0.001}, \texttt{parallel = TRUE}, \texttt{nthreads = 4}.
    \item \textbf{\asgd}: we use the \texttt{cpp.fit.bsgd()} function in the \sgdGMF~package, and we specify the following options: \texttt{ncomp = $d$}, \texttt{familyname = "poisson"}, \texttt{linkname = "log"}, \texttt{lambda = c(0,0,1,0)}, \texttt{maxiter = 500}, \texttt{rate0 = 0.01}, \texttt{size1 = 100}, \texttt{size2 = 20}, \texttt{eps = 1e-08}, \texttt{nafill = 1}, \texttt{tol = 1e-05}, \texttt{damping = 1e-03}.
\end{description}

It is worth mentioning that the original \R~implementation of the \airwls~and \newton~methods can be found in the \GMF~package \citep{Kidzinski2022}. 
However, such an implementation does not permit the inclusion of gene-specific intercepts and covariate effects, say $\bgamma_i^\top \bsz_j$ in our notation, and also it does not allow for parallel computing in Windows operating systems. Therefore, we performed the benchmarking experiments using the \R/\Cpp~implementation in the proposed \sgdGMF~package.

All the options not specified here are left to default values.
For all the \R~scripts we used for running the simulation and plotting the results, please refer to the GitHub repository \href{https://github.com/alexandresegers/sgdGMF_Paper}{\texttt{github/alexandresegers/sgdGMF\_Paper}}.

\section{Supplementary Figures and Tables}%
\label{app:supplementary_figures}
\setcounter{figure}{0}  

\renewcommand{\figurename}{Supplementary Figure}
\renewcommand{\tablename}{Supplementary Table}

\setcounter{table}{0}  

\begin{table}[H]
    \centering
    \caption{%
        \label{tab:exponential_family_distributions}%
        Exponential family laws along with their support space, canonical link ($g_c$), variance ($\nu$), and rescaled deviance ($D/2a$) functions.
        Here, we denote by $\alpha > 0$ the variance parameter of the Negative Binomial distribution.
    }
    \begin{tabular}{llccc}
        \toprule
        Distribution & Support & $g_c(\mu)$ & $\nu(\mu)$ & $D(y,\mu) / 2 a(\phi)$ \\
        \midrule
         Gaussian & $\real$ & $\mu$ & $1$ & $(y - \mu)^2 / 2$ \\
         Gamma & $\real_+$ & $1/\mu$ & $\mu^2$ & $(y - \mu) / \mu - \log(y / \mu)$ \\
         Inv. Gauss. & $\real_+$ & $1/\mu^2$ & $\mu^3$ & $(y - \mu)^2 / (2 y \mu^2)$ \\
         Poisson & $\natural_+$ & $\log(\mu)$ & $\mu$ & $y \log(y / \mu) - (y - \mu)$ \\
         Bernoulli & $\{0, 1\}$ & $\log\{ \mu / (1-\mu)\}$ & $\mu (1-\mu)$ & $y \log(y / \mu) + (1-y) \log \{ (1-y) / (1-\mu) \}$ \\
         Neg. Binom. & $\natural_+$ & $\log(\mu)$ & $\mu (1 + \mu / \alpha)$ & $y \log (y / \mu) - (y + \alpha) \log \{ (y+\alpha) / (\mu + \alpha) \}$ \\
         \bottomrule
    \end{tabular}
\end{table}

\begin{figure}[H]
    \centering
    \includegraphics[width = 
    \textwidth]{./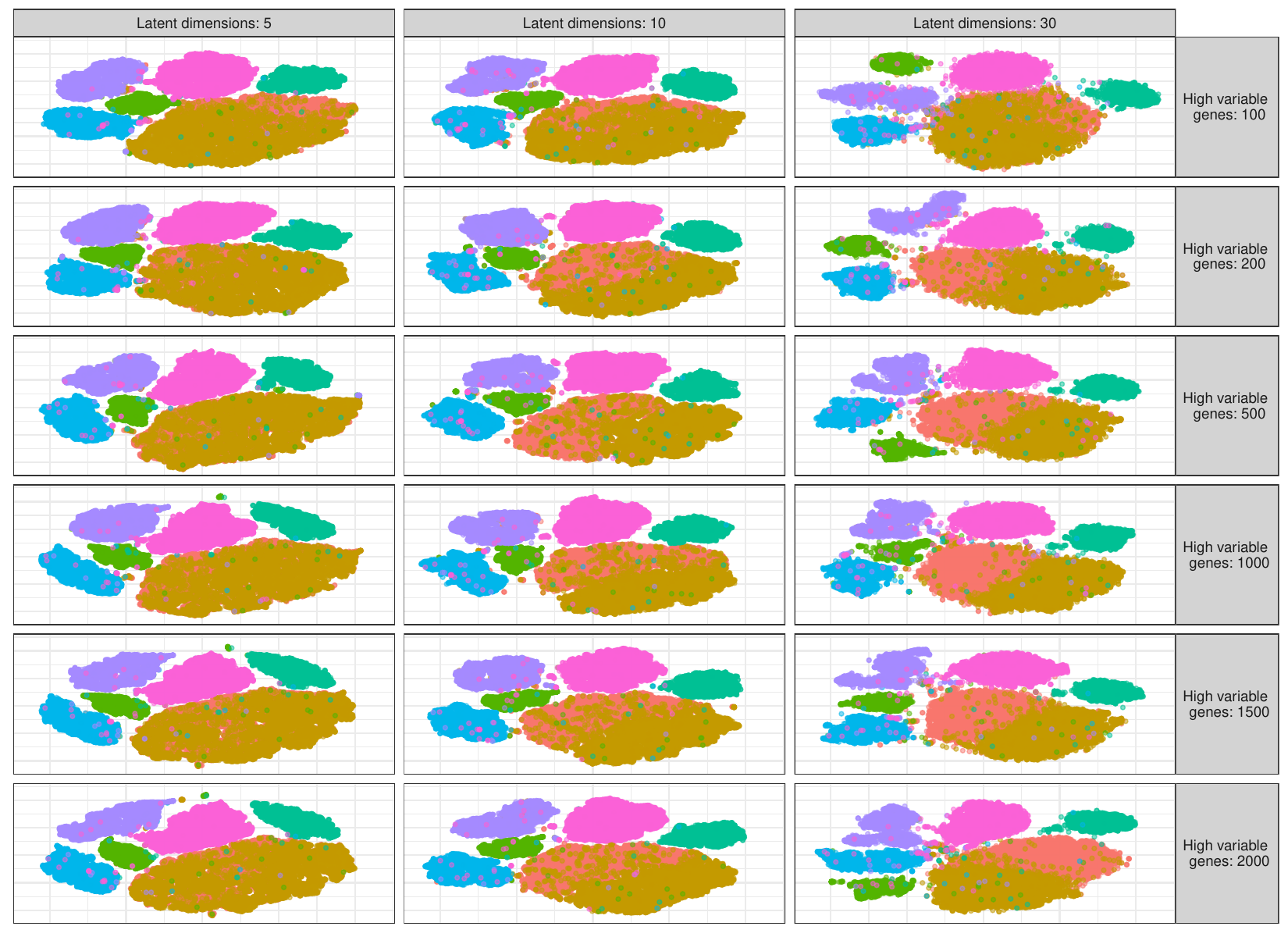}
    \caption{%
    	\label{fig:hvg}%
    	Comparison of tSNE embeddings using \asgd~with a varying number of highest variable genes (hvg) and a varying number of latent factors. No clear differences are observed between the different embeddings from using 500 hvg on, and therefore a default choice of 500 hvg is made.
    }
\end{figure}

\begin{figure}[H]
    \centering
    \includegraphics[width = 
    \textwidth]{./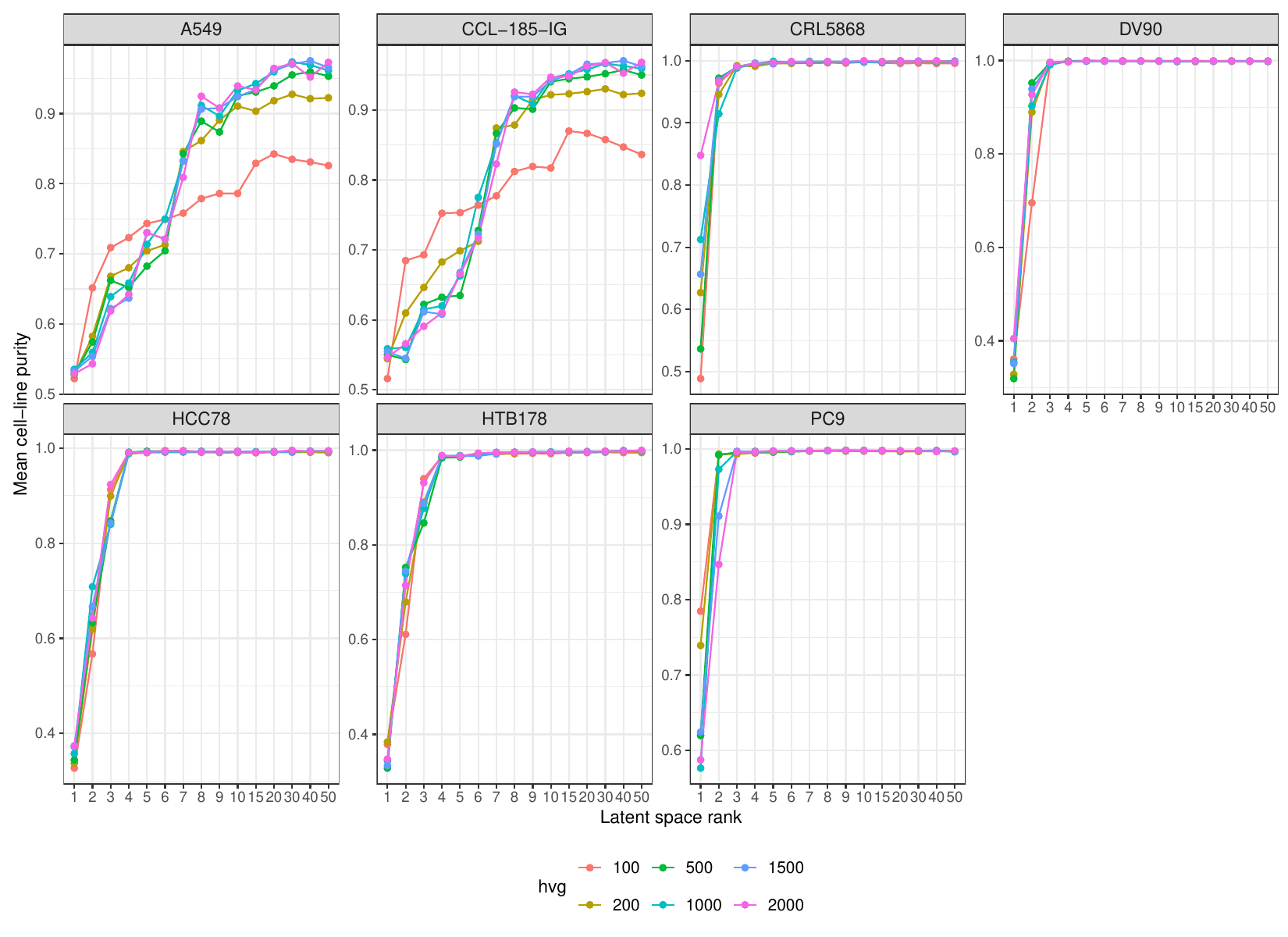}
    \caption{%
    	\label{fig:hvg_purity}%
    	Comparison of mean cell-line purity using \asgd~with a varying number of highest variable genes. No clear differences are observed between the different purities from using 500 hvg on, and therefore a default choice of 500 hvg is made.
    }
\end{figure}

\begin{figure}[H]
    \centering
    \includegraphics[width = 
    \textwidth]{./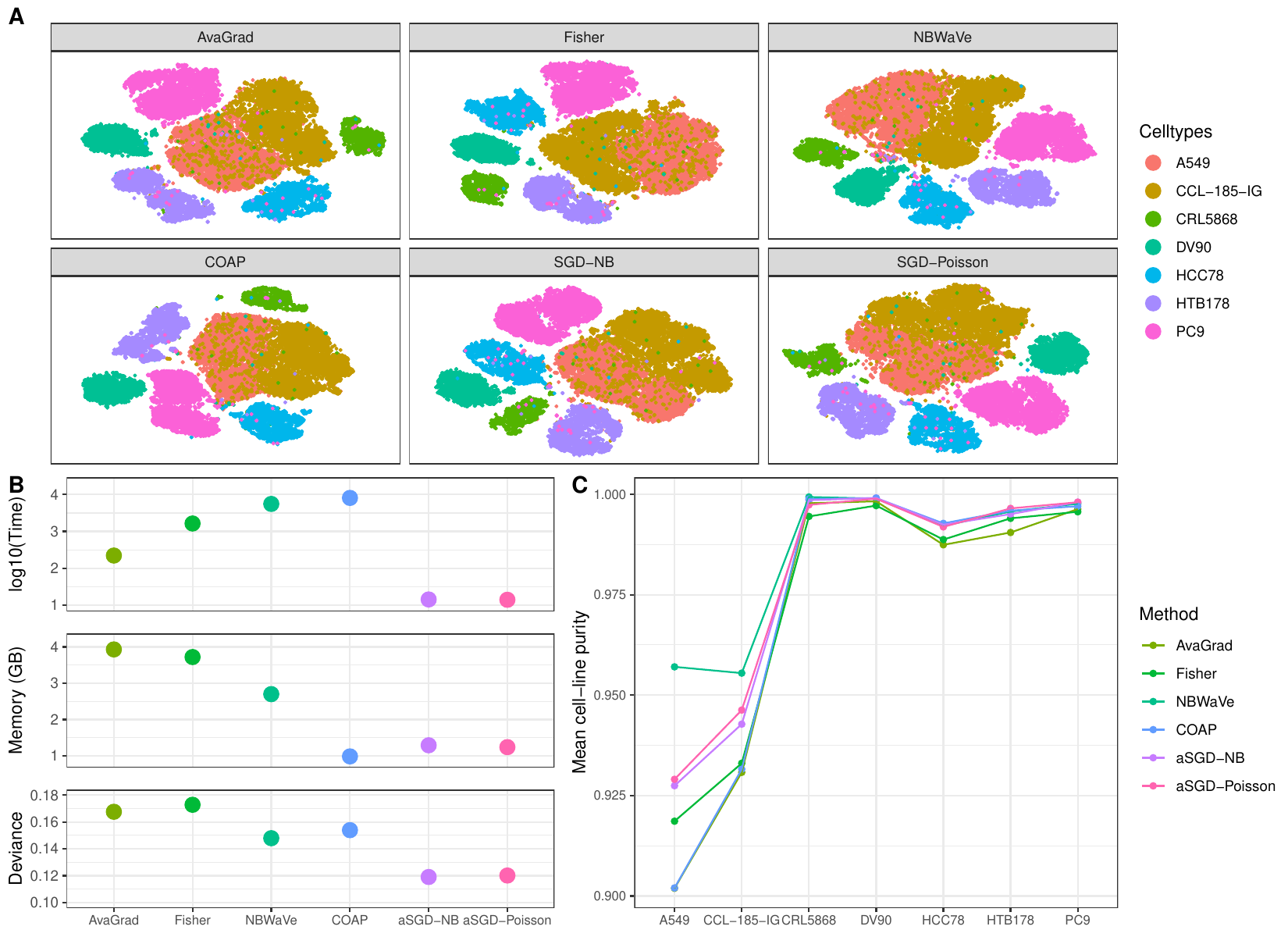}
    \caption{%
    	\label{fig:Arigoni-newwave-comparison}%
    	Comparison of \asgd~using block-subsampling with \nbwave~\citep{Agostinis2022}, \glmpca~\citep{Townes2019}, and \coap~\citep{Liu2024} on the Arigoni dataset \citep{Arigoni2024}. 
    	All methods use a matrix rank of 15, which was suggested by model selection criteria. 
    	A) tSNE embeddings show no clear differences between all methods. 
    	B) \asgd~is orders of magnitude faster compared to the other methods, and has a lower out-of-sample deviance prediction error when predicting missing values. 
    	This is probably due to \nbwave, \glmpca, and \coap~requiring imputation of missing values prior to computation of the latent structure, while \asgd~can deal with missing values internally. 
    	Also, RAM memory usage is similar between \asgd~and \coap, which are both much more efficient compared to \nbwave~and \glmpca.  
    	C) The mean cell-line cluster purity is also similar for all methods. 
    	\asgd~gives similar results on all levels when using the Poisson and Negative Binomial family on the Arigoni dataset. 
    	Therefore, \asgd~is a fast alternative to \nbwave, \glmpca, and \coap, while having similar performance.
    }
\end{figure}

\begin{figure}
    \centering
    \includegraphics[width = 
    \textwidth]{./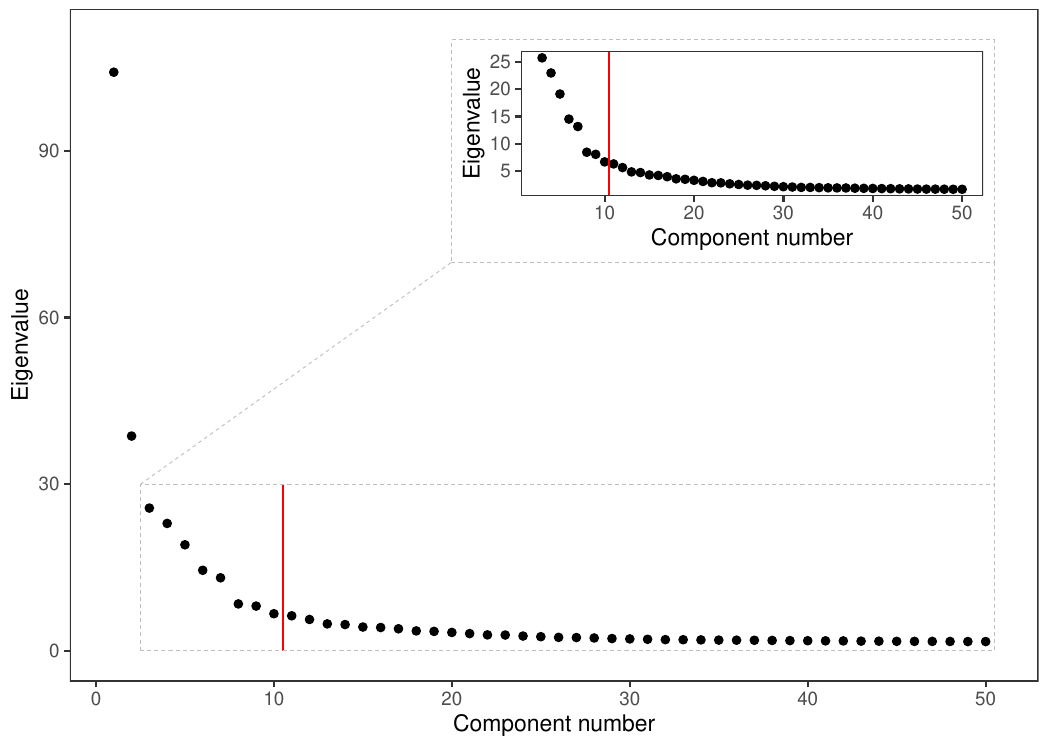}
    \caption{Eigenvalues of the latent directions ordered in descending values for the TENxBrainData \citep{Lun2023}. This plot, also called an elbow-plot can be used for model selection, by looking for an elbow in the curve. Here, 10 latent factors were chosen to be used in the case study.
    }
    \label{fig:casestudy-screeplot}
\end{figure}

\end{appendix}

\end{document}